\newmdenv[
linecolor=gray,
linewidth=3pt,
topline=false,
bottomline=false,
rightline=false,
rightmargin=-10pt,
leftmargin=0pt,
skipabove=\topsep,
skipbelow=\topsep
]{siderules}
\newtheorem{theorem}{Theorem}
\newtheorem{lemma}[theorem]{Lemma}
\newtheorem{claim}[theorem]{Claim}
\def\EE{\mathbb E}
\def\RR{\mathbb R}
\def\ZZ{\mathbb Z}
\def\mC{\mathscr C}
\def\cA{\mathcal A}
\def\cB{\mathcal B}
\def\cC{\mathcal C}
\def\cD{\mathcal D}
\def\cS{\mathcal S}
\def\cP{\mathcal P}
\def\cQ{\mathcal Q}
\def\cX{\mathcal X}
\def\bC{\mathbf C}
\def\bE{\mathbf E}
\def\b1{\mathbf 1}
\def \Ancrr {\textsc{Ancrr}}
\def \OPT {\textsc{Opt}}
\newcommand{\poly}{\operatorname{poly}}
\newcommand{\polylog}{\operatorname{polylog}}
\newcommand{\argmin}{\operatorname{argmin}}
\newcommand{\argmax}{\operatorname{argmax}}
\newtheorem{fact}{Fact}
\newtheorem{remark}{Remark}
\newtheorem{definition}{Definition}
\newtheorem{corollary}{Corollary}
\newcommand{\raf}[1]{(\ref{#1})}
\title{Some Black-box Reductions for Objective-robust Discrete Optimization Problems Based on their LP-Relaxations}
\author{
	Khaled Elbassioni\thanks{Khalifa University of Science and Technology, Masdar City Campus, P.O. Box 54224, Abu Dhabi, UAE;
		(kelbassioni@masdar.ac.ae)}
}
\date{}
\begin{document}
\maketitle

\begin{abstract}
We consider robust discrete minimization problems where uncertainty is defined by a convex set in the objective.  We show how an integrality gap verifier for the  linear programming relaxation of the non-robust version of the problem can be used to derive approximation algorithms for the robust version.
\end{abstract}
\section{Introduction}\label{Intro}

Standard optimization algorithms assume precise knowledge of their inputs, and find optimal or near-optimal solutions under this assumption. However, in real-life applications, the input data  may be known up to a limited precision with errors introduced possibly due to inaccuracy in measurements or lack of exact information about the precise value of the input parameters. Clearly, an optimization algorithm designed based on such distorted data to optimize a certain objective function would not yield reliable results, if no special consideration of such uncertainty is taken. 
Several approaches to deal with uncertainty in data have been introduced, including {\it stochastic optimization} (see e.g., \cite{BL11}), where certain probabilistic assumptions are made on the uncertainty and the objective is optimize the average-case or the probability of a certain desirable event, and {\it robust optimization} (see, .eg., \cite{BEN09}), where some deterministic assumptions are made on the uncertain parameters, and the objective is to optimize over the worst-case  these parameters can assume\footnote{Yet, there is a third (intermediate) approach, namely, {\it distributionally robust optimization} (see, e.g., \cite{DY10}), in which one optimizes the expectation over the worst-case choice from a set of distributions on the uncertain parameters.}.   

In this paper, we consider a class of {\it robust} discrete optimization (DO) problems, where uncertainty is assumed to be only {\it in the objective} (called sometimes {\it cost-robust} optimization problems). Given a discrete set of solutions, on is interested in maximizing/minimizing a linear objective function over this set; it is assumed that the objective function is not explicitly given, but is known to belong to a {\it convex} an uncertainty set. The requirement is to solve the optimization problem in the {\it worst-case} scenario that the objective assumes in the uncertainty set.
 Our goal is to show how an approximation algorithm, based on the linear programming (LP) relaxation for the nominal version of a discrete optimization problem, can be used to derive an approximation algorithm for the robust version. We will focus on minimization problems, even though some of the results can be extended to maximization problems. 

\subsection{Integrality Gap Verifiers}
More formally, we consider a minimization problem over a discrete set $\cS\subseteq\ZZ^n$ and a corresponding LP-relaxation over $\cQ\subseteq\RR^n_+$:

{\centering \hspace*{-18pt}
	\begin{minipage}[t]{.47\textwidth}
		\begin{alignat}{3}
		\label{cop-1}
		\quad& \displaystyle \OPT = \min\quad c^T x\\
		\text{s.t.}\quad & \displaystyle x\in \cS\nonumber
		\end{alignat}
	\end{minipage}
	\,\,\, \rule[-10ex]{1pt}{10ex}
	\begin{minipage}[t]{0.47\textwidth}
		\begin{alignat}{3}
		\label{lprlx}
	\quad& \displaystyle z^* = \min\quad c^Tx\\
		\text{s.t.}\quad & \displaystyle x\in\cQ\nonumber,
		\end{alignat}
\end{minipage}}

\noindent where $c\in\RR_+^n$. 
We will be mainly working with discrete optimization problems for which there is an approximation algorithm that rounds any feasible LP solution to a discrete one with a bounded approximation ratio. This is formulated in the following definition.   
\begin{definition}\label{d1}
	For $\alpha\ge 1$, a (deterministic) $\alpha$-{\it integrality gap verifier} $\cA=\cA(c,x)$ for~\raf{cop-1}-\raf{lprlx}, w.r.t. a class $\mC\subseteq\RR_+^n$ of objectives is a polytime algorithm that, given any $c\in\mC$ and any $x\in\cQ$ returns an $\widehat x\in\cS$ such that $c^T\hat x\le\alpha\cdot c^Tx$. An integrality gap verifier $\cA$ is said to be \emph{ oblivious} (see, e.g., \cite{FFT16}) if $\cA(c,x)=\cA(x)$ does {\it not} depend on the objective $c$. When the the class of objectives is $\mC=\RR_+^n$, we simply call $\cA$ an (oblivious) integrality gap verifier.
\end{definition}

A {\it randomized} $\alpha$-{\it integrality gap verifier} is the same as in Definition~\ref{d1} except that it returns a random $\widehat x\in\cS$ such that $\EE[c^T\widehat x]\le\alpha c^Tx$. We will consider a special class of randomized integrality gap verifiers that are given by the following definition.

\begin{definition}\label{d2}
	For $\alpha\ge 1$ and $x\in\cQ$, an $\alpha$-{\it approximate (semi-) negatively correlated randomized rounding}, denoted $\alpha$-\textsc{Ancrr}, of $x$ is an $\widehat x\in\cS$ such that:
	\begin{itemize}
		\item [(i)] $\EE[c^T\widehat x]\le\alpha c^Tx$; 
		\item [(ii)] For any $S\subseteq[n]$:
		\begin{align}
		\Pr\Big[\bigwedge_{i\in S}(\widehat x_i=1)\Big]&\le\prod_{i\in S}\Pr[\widehat x_i=1].\label{nc2}
		\end{align} 		
	\end{itemize}
An $\alpha$-\Ancrr\ integrality gap verifier is a polytime algorithm that, given any $x\in\cQ$, returns an $\alpha$-\Ancrr. 
\end{definition}

\begin{remark}\label{r0}
Consider a minimization problem \raf{cop-1} and its LP relaxation~\raf{lprlx}. By Markov's inequality, given an $\alpha$-randomized intergality gap verifier $\cA$, $x\in\cQ$, $c\in\RR_+^n$ and $\epsilon>0$, we can get in $\poly(n,\frac{1}{\epsilon})$ calls to $\cA$ an $\widehat x\in\cS$ such that $c^T\widehat x\leq(1+\epsilon)\alpha\cdot c^Tx$ holds with probability $1-o(1)$.
\end{remark}

\subsection{Example: {\sc SetCover}}
Let $V$ be a finite set of $m$ elements. Given sets $S_1,\ldots,S_n\subseteq V$, with 
{\it non-negative costs} $c_1,\ldots,c_n$, the objective is to find a {\it minimum-cost} selection of sets that {\it covers} all the elements of $V$.
The problem and its standard LP relaxation are given as follows:
	{\centering
		\begin{minipage}[t]{.43\textwidth}
			\begin{alignat*}{3}
			\quad& \displaystyle \OPT=  \min\quad \sum_{i=1}^nc_ix_i\\
			\text{s.t.}\quad & \displaystyle \sum_{i:~j\in S_i} x_i\ge 1,\quad \forall j\in V\nonumber\\
			\quad&x_i\in\{0,1\},\quad \forall i\in [n]
			\end{alignat*}
		\end{minipage}
		\,\,\, \rule[-20ex]{1pt}{20ex}
		\begin{minipage}[t]{0.43\textwidth}
			\begin{alignat*}{3}
			\quad& \displaystyle z^* =  \min\quad \sum_{i=1}^nc_ix_i\\
			\text{s.t.}\quad & \displaystyle \sum_{i:~j\in S_i} x_i\ge 1,\quad \forall j\in V\nonumber\\
			\quad&x_i\ge0,\quad \forall i\in [n]
			\end{alignat*}
	\end{minipage}}

It is well-known that the {\it greedy algorithm} that repeatedly picks a set with minimum cost to number of newly covered elements-ratio (deterministically) verifies an integrality gap of $O(\log m)$ for the standard LP relaxation of {\sc SetCover}. Moreover, let $x^*$ be the {\it fractional} optimal solution (to the LP relaxation). Then it is also well-known that the algorithm that picks each set $S_i$ {\it independently} with probability $\min\big\{6x^*_i\log m,1\big\}$ is an $O(\log m)$--\Ancrr\ integrality gap verifier.

\subsection{Robust Discrete Optimization Problems}

In the framework of robust optimization (see, .e.g. \cite{BEN09,BN02}), we assume that the objective vector $c$ is {\it not known exactly}. Instead, it is given by a {\it convex uncertainty set} $\cC\subseteq\RR^{n}_+$.
It is required to find a (near)-optimal solution for the DO problem under the {\it worst-case} choice of objective $c\in\cC$.  Typical examples of uncertainty sets $\cC$ include:
	\begin{itemize}
		\item Polyhedral uncertainty: $\cC:=\cP(A,b,c^0):=\{c\in\RR^{n}_+:A(c-c^0)\le b\}$, for given matrix $A\in\RR^{m\times n}_+$, vector $b\in\RR^{m}_+$ and (nominal) vector $c^0\in\RR^{n}_+$.		
		\item Ellipsoidal uncertainty: $\cC:=\bE(c^0,D):=\{c\in\RR^{n}_+:(c-c^0)^TD^{-2}(c-c^0)\le 1\}$, for given positive definite matrix $D\in\RR^{m\times n}$ and vector $c^0\in\RR^{n}_+$.
	\end{itemize}
More generally, we will consider a class of uncertainty sets defined by {\it affine perturbations} around a nominal vector $c^0\in\RR_+^n$ (see,. e.g., \cite{BEN09}): 
\begin{align}\label{Affine}
\cC=\cC(c^0,c^1,\ldots,c^r;\cD):=\left\{c:=c^0+\sum_{r=1}^k\delta_rc^r:~\delta=(\delta_1,\ldots,\delta_k)\in\cD\right\}, 
\end{align}
where  
$c^0,c^1,\ldots,c^k\in\RR^n_+$ and $\cD\subseteq\RR^k$ is a {\it convex} perturbation set:
\begin{itemize}
	\item Polyhedral perturbation $\cD=\cP(A,b,0):=\{\delta\in\RR^{k}_+:A\delta\le b\}$, for given matrix $A\in\RR^{m\times k}_+$ and vector $b\in\RR^{m}_+$.		
	\item Ellipsoidal perturbation: $\cD=\bE(0,D):=\{\delta\in\RR^{k}_+:\delta^TD^{-2}\delta\le 1\}$, for a given positive definite matrix $D\in\RR^{k\times k}$.
\end{itemize}
The vectors $c^1,\ldots,c^k\in\RR^n_+$ are called the {\it generators} of the perturbation set $\cD$. 
Note that a polyhedral uncertainty set $\cP(A,b,c^0)$ can be described in the form~\raf{Affine} by setting $\cC:=\cC(c^0,\b1^1_n,\ldots,\b1^n_n;\cD)$ for the polyhedral perturbation set $\cD:=\cP(A,b,0)$, where $\b1^j_n$ denotes the $j$th unit vector in $\RR^n$. Similarly, an ellipsoidal uncertainty set $E(c^0,D)$ can be described in the form~\raf{Affine} by setting $\cC:=\cC( c^0,\b1^1_n,\ldots,\b1^n_n;\cD)$ for the ellipsoidal perturbation set $\cD:=\bE(0,D)$.
\subsection{Convex Relaxation for the Robust DO Problem}
We can model the robust DO problem and its convex relaxation as follows:

{\centering
	\begin{minipage}[t]{.47\textwidth}
		\begin{alignat}{3}
		\label{rcop-1}
		\quad& \displaystyle \OPT_R =\min_{x\in \cS} \max_{c\in\cC}\quad c^Tx,
		\end{alignat}
	\end{minipage}
	\,\,\, \rule[-8ex]{1pt}{7ex}
	\begin{minipage}[t]{0.47\textwidth}
		\begin{alignat}{3}
			\label{rlp-rlx}
		\quad& \displaystyle z_R^* =\min_{x\in \cQ} \max_{c\in\cC}\quad c^Tx.
		\end{alignat}
\end{minipage}}

Equivalenlty, we can write~\raf{rcop-1}-\raf{rlp-rlx} as

{\centering
	\begin{minipage}[t]{.47\textwidth}
		\begin{alignat}{3}
		\label{ercop-1}
		\quad& \displaystyle \OPT_R = \min\quad z \\
		\text{s.t.}\quad & \displaystyle c^Tx\le z \quad \forall {c\in\cC}\label{ercop-1-e1}\\
		&x\in \cS.\nonumber
		\end{alignat}
	\end{minipage}
	\,\,\, \rule[-14ex]{1pt}{14ex}
	\begin{minipage}[t]{0.47\textwidth}
		\begin{alignat}{3}
		\label{erlp-rlx}
		\quad& \displaystyle z_R^* =  \min\quad z\\
		\text{s.t.}\quad & \displaystyle c^Tx\le z \quad \forall {c\in\cC}\label{erlp-rlx-e1}\\
		& x\in\cQ.\nonumber
		\end{alignat}
\end{minipage}}

Note that~\raf{rlp-rlx} amounts to a convex programming problem that can be solved (almost to optimality) in polynomial time (see, e.g., \cite{GLS93}). {\it Near-optimal} solutions can also be found more {\it efficiently},  based on the semi-infinite LP formulation~\raf{erlp-rlx}, using the {\it multiplicative weight updates method}  \cite{EMN19}. 

\subsection{Guarantees for a Robust DO problem}

We consider both deterministic and randomized algorithms for the robust optimization problem (see, e.g., \cite{BS03,KS18}): 
\begin{definition}
	For $\alpha\ge 1$, a {\it randomized approximation algorithm} $\cB$ for the robust DO problem~\raf{rcop-1} is said to be:
	\begin{itemize}
		\item $\alpha$-\emph{robust-in-expectation} (w.r.t. the uncertainty set $\cC$), if the expected objective in the uncertainty set $\cC$,  w.r.t. the output solution, over the random choices of the algorithm, is within a factor of $\alpha$ from the optimum solution:
	\begin{align*}
	\EE_{\widehat x\in\cB}[c^T\widehat x]\le \alpha \cdot \OPT_R\quad\forall c\in\cC;
	\end{align*}
	\item $\alpha$-\emph{robust-with-high-probability}, if with probability  approaching $1$, all objectives in the uncertainty set $\cC$, w.r.t. the output solution, are within a factor of $\alpha$ from the optimum solution:
	\begin{align*}
\Pr_{\widehat x\in\cB}[c^T\widehat x\le \alpha \cdot \OPT_R\quad\forall c\in\cC]=1-o(1);
	\end{align*}
	\item $\alpha$-\emph{deterministically robust} if it is $\alpha$-robust {\it with probability $1$}, i.e., it outputs a vector $\widehat x\in\cS$ such that:
	\begin{align*}
	 c^T\widehat x\le \alpha \cdot \OPT_R\quad\forall c\in\cC.
	\end{align*}			
	\end{itemize}
\end{definition}
Clearly, the notion of $\alpha$-deterministically robust is stronger than that of $\alpha$-robust-with-high-probability, which is, in turn, (more or less) stronger than that of $\alpha$-robust-in-expectation.

\subsection{Summary of Main Results}

To describe the results we obtain in this paper, let us consider the polyhedral/ellipsoidal uncertainty sets:
		\begin{align}
		\cC_1&:=\left\{c:=c^0+u~\left|~u\in\RR_+^n,~u\le d,\quad Au\le b\right.\right\}\label{cC1}\\
		\cC_2&:=\left\{c:=c^0+\sum_{r=1}^k\delta_rc^r~\left|~\delta\in\RR_+^k,~A\delta\le b\right.\right\}\label{cC2}\\
		\cC_3&:=\left\{c:=c^0+u~\left|~u\in\RR_+^n,~\|D^{-1}u\|_2\le 1\right.\right\}\label{cC3}\\
		\cC_4&:=\left\{c:=c^0+\sum_{r=1}^k\delta_rc^r~\left|~\delta\in\RR_+^k,~\|D^{-1}\delta\|_2\le 1\right.\right\}\label{cC4}
		\end{align}
   Assume $A,b,d,\bC$ are non-negative and $D$ is positive definite, where $\bC\in\RR_+^{n\times k}$ is the matrix whose columns are $c^1,\ldots,c^k$.
Let $m$ be the number of rows of $A$, $\beta:=\min_j \max_i a_{ij}$ and $\gamma:=\max_{i,j}a_{ij}$, $c_{\min}:=\min_{r\ne 0,j:~c^r_{j}>0}c^r_{j}$ and $c_{\max}:=\max_{r\ne 0,j}c^r_{j}$. Our results are summarized in Table~\ref{tab1}. The first column describes the restrictions on the discrete set $\cS$ (if any): $\cS$ is {\it binary} if $\cS\subseteq\{0,1\}^n$ and {\it covering}  if  $x\in\cS$ and $y\ge x$ implies $y\in\cS$. In the second column, we describe the type of uncertainty set considered, and the conditions on it (if any). The third column  gives the type of approximation algorithm which we assume available for the nominal problem, while the fourth column gives the  guarantee for the corresponding robust version. As can be seen from the table, except for the first two results, the approximation factors we obtain depend on the "width" of the uncertainty set as described by the ratios $\frac{\gamma}{\beta}$ and $\frac{c_{\max}}{c_{\min}}$ for polyhedral uncertainty, and  $\frac{\lambda_{\max}(D)}{\lambda_{\min}(D)}$ for ellipsoidal uncertainty.  The approximation ratio is also proportional to the square root of the number of generators in the perturbation set. Whether these bounds can be significantly improved remains an interesting open question.  

\begin{table}[h]
	\begin{center}
	\begin{tabular}{|l|l|l|l|}
		\hline
		$\cS$& Uncertainty set & Available black-box & Approximation guarantee\\
		\hline\hline
		General& general convex set& general $\alpha$-integrality &$\alpha$-robust-in-expectation\\
		& & gap verifier& \\\hline
		Binary& $\cC_1$; $m=O(1)$ & general $\alpha$-approx. Alg.& $O(\alpha)$-deterministically robust\\\hline
		Binary \&  & $\cC_1$ & general $\alpha$-integrality  & $O\big(\alpha+\sqrt{\frac{\alpha\gamma n}{\beta}}\big)$-\\
		covering & & gap verifier& deterministically robust\\\hline
		Binary & $\cC_2$ & $\alpha$-\Ancrr\ integrality & $O\big(\alpha\sqrt{k \log(k)\frac\gamma\beta\frac{c_{\max}}{c_{\min}}}\big)$-\\
		& & gap verifier& robust-with-high-probability \\\hline
		General & $\cC_4$; $D\bC\ge0$ & general $\alpha$-integrality & $O\big(\alpha\sqrt{k}\big)$-\\
		& & gap verifier& deterministically robust- \\\hline
		Binary \&  & $\cC_3$; $D^{-1}>0$  & general $\alpha$-integrality  & $O\big(\alpha+\sqrt{\frac{\alpha\lambda_{\max}(D) n}{\lambda_{\min}(D)}}\big)$-\\
		covering & & gap verifier& deterministically robust\\\hline
		Binary & $\cC_4$; $D^{-1}>0$ & $\alpha$-\Ancrr\ integrality & $O\big(\alpha\sqrt{k \log(k)\frac{\lambda_{\max}(D)}{\lambda_{\min}(D)}\frac{c_{\max}}{c_{\min}}}\big)$-\\
		& & gap verifier& robust-with-high-probability \\\hline
	\end{tabular}
\end{center}
	\label{tab1}
\caption{Summary of the reductions.}
\end{table}

\subsection{Some Related Work}
While there is an extensive body of work on robust {\it continuous} optimization problems (see, e.g., \cite{BEN09,BN98,BN02,BBC11,BS06,EOL98,P16}), much less is known in the {\it discrete} case, where most work has considered special uncertainty sets or specific discrete problems. In \cite{BS03}, Bertsimas and Sim consider the minimization problem~\raf{rcop-1} with {\it budget uncertainly}, where at most $k$ components of the objective are allowed to increase; for {\it binary} optimization problems they gave an $\alpha$-deterministically robust approximation algorithm for the robust version which is obtained by making $n+1$ calls to any $\alpha$-approximation algorithm for the non-robust version.  Some generalizations of this result to the {\it non-binary} case were obtained in \cite{GSTP12}, and other improvements and generalizations were obtained  in~\cite{ALT13}. In Section~\ref{budget} below, we show that the number of calls to the approximation algorithm can be made significantly smaller and also extend the result to any {\it constant} number of budget constraints. For {\it uncorrelated ellipsoidal uncertainty} (where the uncertainty set is an axis-aligned ellipsoid), Bertsimas and Sim \cite{BS04} also gave a {\it pseudo polynomial-time} reduction from solving a robust version problem over a binary set $\cS$ to a linear optimization problem over the same set. As observed in \cite[Chapter 2]{I17}, when specialized to {\it ball uncertainty}, this yields a polynomial time algorithm for solving the robust problem, whenever the nominal version can be solved in polynomial time. This should be contrasted with our result in Theorem~\ref{PU3-thm}, where an $O(\alpha\sqrt{n})$-approximation for the robust problem  with ellipsoidal uncertainty, satisfying $D>0$, over an arbitrary discrete set, can be obtained from any $\alpha$-integrailty gap verifier for the nominal problem.
 
More recently, Kawase and Sumita (2018) gave robust-in-expectation algorithms for special problems such as the {knapsack} problem and the {maximum independent set} problem in the {\it intersection of $r$ matroids}, among others. We note, however, that their results are not of the black-box type, that is, they provide algorithms that are specific to each problem. We note also that some of these results can be derived from our reduction in Section~\ref{exp}.  Finally, it is worth noting that there is a number of results on special problems, such as {\sc ShortestPath} \cite{PPGP15}, {\sc MinCostFlow} \cite{BS03}, 
{\sc MachineScheduling} \cite{BPP19}, {\sc VehicleRouting }\cite{ASNP16}, two-stage robust optimization \cite{DGRS05,FJMM07}, mostly under a class of budget uncertainty.  In general, this seems to be a growing area of research, see, e.g., the theses by Poss~\cite{P16} and Ilyina~\cite{I17}.

\medskip

\noindent{\bf Outline of the techniques.} All the results in Table~\ref{tab1} are based on solving the convex relaxation for the robust  optimization problem (in some form or th other), then rounding the obtained fractional solution. A useful tool that we rely on, first proved by Carr and Vempala \cite{CV02}, allows one to use a given integrality gap verifier for the LP-relaxation to round the fractional solution without losing much in the objective. Another ingredient of our proofs is the use of {\it strong LP-duality} to go from a $\text{maxmin}$-optimization problem to a purely minimization problem; this was the approach used by Bertsimas and Sim in \cite{BS03}, which we push further by combining it with randomized rounding techniques, and using a {\it dual -fitting} argument to bound the approximation guarantee on the rounded solution. First we describe this approach for polyhedral uncertainty, then it would not be hard to extend the results to ellipsoidal uncertainty, by envisioning an ellipsoid as a polytope with {\it infinitely many} linear inequalities. 

\section{A Robust-in-Expectation Approximation Algorithm}\label{exp}


We first observe simply that an {\it oblivious} intergality gap verifier for the nominal problem implies an $\alpha$-robust-in-expectation algorithm for the robust version.
\begin{lemma}\label{t-rie-lp}
	Consider a combinatorial  minimization problem~\raf{cop-1} and its LP relaxation~\raf{lprlx}, admitting an oblivious $\alpha$-integrality gap verifier $\cA$ w.r.t. a class $\mC$ of objectives. Then there is a polytime $\alpha$-robust-in-expectation algorithm for the robust version~\raf{ercop-1} w.r.t. to the any convex uncertainty set $\cC\subseteq\mC$.
\end{lemma}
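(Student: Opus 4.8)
The plan is to reduce the robust problem to a single application of the oblivious verifier, exploiting precisely the fact that its output does not depend on the objective. First I would solve the convex relaxation \raf{rlp-rlx} to obtain a fractional optimum $x^*\in\cQ$ with $\max_{c\in\cC}c^Tx^*=z_R^*$; as observed after \raf{erlp-rlx}, this convex program is solvable in polynomial time up to any desired accuracy, a point I will absorb into a harmless $(1+\eps)$ factor (or ignore under exact solvability). Since the LP is a relaxation of the discrete problem ($\cS\subseteq\cQ$), minimizing over the larger feasible set can only decrease the optimal value, so $z_R^*\le\OPT_R$.

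Next I would round $x^*$ using the oblivious verifier, setting $\widehat x:=\cA(x^*)\in\cS$. The crucial point is that, because $\cA$ is oblivious, the returned $\widehat x$ is fixed independently of the objective, so the integrality-gap guarantee of Definition~\ref{d1} holds \emph{simultaneously} for every admissible objective: for all $c\in\cC\subseteq\mC$,
\[
c^T\widehat x\le\alpha\, c^Tx^*\le\alpha\max_{c'\in\cC}{c'}^Tx^*=\alpha\, z_R^*\le\alpha\,\OPT_R .
\]
Thus $c^T\widehat x\le\alpha\,\OPT_R$ for every $c\in\cC$. Because the algorithm is deterministic, the expectation in the definition of robust-in-expectation is vacuous, so $\EE[c^T\widehat x]=c^T\widehat x\le\alpha\,\OPT_R$ for all $c\in\cC$, which is exactly the claimed guarantee (in fact this establishes the formally stronger $\alpha$-deterministically-robust property).

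The entire argument hinges on obliviousness, and I expect that to be the only genuine subtlety. Were $\cA$ permitted to depend on $c$, the rounded solution $\cA(c,x^*)$ could be good for the particular $c$ fed to the verifier yet arbitrarily bad for the worst-case $c$ subsequently chosen by the adversary, and the key step of ``rounding once and then maximizing over $c$'' would break down; obliviousness is exactly what lets the pointwise bound $c^T\widehat x\le\alpha\,c^Tx^*$ survive the outer maximization. The remaining points are routine: checking that the hypothesis $\cC\subseteq\mC$ makes the verifier's guarantee applicable to every objective in the uncertainty set, and noting that solving \raf{rlp-rlx} only to accuracy $\eps$ costs at most a $(1+\eps)$ factor, which is negligible and polynomial-time controllable.
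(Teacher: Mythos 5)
Your proof is correct and follows essentially the same route as the paper: solve the robust convex relaxation \raf{erlp-rlx} to get $x^*$, round once with the oblivious verifier, and chain $c^T\widehat x\le\alpha\,c^Tx^*\le\alpha\,z_R^*\le\alpha\,\OPT_R$ uniformly over $c\in\cC$, with obliviousness doing exactly the work you identify. One caveat: you assume $\cA$ is deterministic and conclude the stronger deterministically-robust guarantee, whereas the paper's proof writes $\EE_{\widehat x\sim\cA(x^*)}[c^T\widehat x]\le\alpha\,c^Tx^*$ because the oblivious verifier may be \emph{randomized} --- and this is the case actually needed later, since Corollary~\ref{c1} (via Carr--Vempala) produces a randomized oblivious verifier for Theorem~\ref{t-main-lp}; in that setting your final step fails (a per-$c$ expectation bound does not give a simultaneous almost-sure bound over all $c\in\cC$), though the in-expectation conclusion of the lemma still follows verbatim from your chain by linearity of expectation.
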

\begin{proof}
	We solve the robust convex relaxation~\raf{erlp-rlx} to find $z^*_R$ and a corresponding optimal solution $x^*\in\cQ$.
	Since $\cA$ is an oblivious $\alpha$-integrality gap verifier, we have for all $c\in\cC$,
	\begin{align*}
	\EE_{\widehat x\sim\cA(x^*)}[c^T\widehat x]&\le\alpha \cdot c^Tx^*\le \alpha \cdot z^*_R\le \alpha \cdot \OPT_R,
	\end{align*}  
	where the second inequality follows by~\raf{erlp-rlx-e1}.
\end{proof}

Carr and Vempala \cite{CV02} gave a decomposition theorem that allows one to use an $\alpha$-integrality gap verifier for a given LP-relaxation of a combinatorial minimization problem, to decompose a given fractional solution to the LP into a convex combination of integer solutions that is dominated by $\alpha$ times the fractional solution. We can restate their result as follows.
\begin{theorem}[\cite{CV02}]\label{CV-LP}
	Consider a discrete  minimization problem~\raf{cop-1} and its LP relaxation~\raf{lprlx}, admitting an $\alpha$-integrality gap verifier $\cA$. Then there is a polytime algorithm that, for any given $x^*\in\cQ$, finds a set $\cX\subseteq\cS$, of {\it polynomial} size, and a set of convex multipliers $\{\mu_x\in\RR_+:~x\in\cX\}$, $\sum_{x\in\cX}\mu_x=1$,  such that 
	\begin{align}\label{cvx-comb-LP}
      \alpha x^*\geq\sum_{x\in\cX}\mu_xx.
	\end{align}
\end{theorem}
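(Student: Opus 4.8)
The plan is to set up the natural ``convex combination'' LP and certify its feasibility through LP duality, using the integrality gap verifier $\cA$ precisely to control the dual. Concretely, I would introduce a variable $\mu_x\ge 0$ for every $x\in\cS$ and consider the primal $\max\sum_{x\in\cS}\mu_x$ subject to $\sum_{x\in\cS}\mu_x x\le\alpha x^*$ and $\mu\ge 0$. Since $\mu\equiv 0$ is feasible (as $x^*\ge 0$) and the dual is feasible for all sufficiently large $y$, strong LP duality applies and both optima are finite and equal. The goal is then to show the common optimum is at least $1$: given any optimal $\mu$ with $s:=\sum_x\mu_x\ge 1$, rescaling to $\mu_x/s$ produces genuine convex multipliers summing to $1$ whose combination is still dominated by $\alpha x^*$ (using $s\ge 1$ and $x^*\ge 0$), which is exactly~\raf{cvx-comb-LP}.

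The dual of this primal is $\min\alpha (x^*)^Ty$ subject to $y^Tx\ge 1$ for all $x\in\cS$ and $y\ge 0$. The key observation is that any dual-feasible $y$ may be read as a non-negative objective vector: feasibility says exactly that $\min_{x\in\cS}y^Tx\ge 1$, i.e.\ $\OPT\ge 1$ for the objective $y$. Feeding $y$ and $x^*$ to the verifier returns $\widehat x=\cA(y,x^*)\in\cS$ with $y^T\widehat x\le\alpha\,y^Tx^*$, while $\widehat x\in\cS$ forces $y^T\widehat x\ge 1$; chaining these gives $\alpha (x^*)^Ty\ge 1$. Hence every dual-feasible point has objective at least $1$, so the dual optimum, and therefore the primal optimum, is at least $1$, as required. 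This part is clean precisely because Definition~\ref{d1} couples the verifier's guarantee to an arbitrary non-negative objective.

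The part that needs the most care, and which I expect to be the main obstacle, is obtaining a set $\cX$ of \emph{polynomial} size together with a polynomial running time, since the primal has one variable per element of the (possibly exponentially large) $\cS$. I would run the ellipsoid method on the dual, solving the feasibility question ``is there $y\ge 0$ with $y^Tx\ge 1$ for all $x\in\cS$ and $\alpha (x^*)^Ty\le 1-\eps$?'', using $\cA$ as an \emph{approximate} separation oracle: at a queried $y$ the non-negativity and objective cuts are explicit, and for the constraints $y^Tx\ge 1$ I compute $\widehat x=\cA(y,x^*)$ and return $y^T\widehat x\ge 1$ as a violated inequality whenever $y^T\widehat x<1$. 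By the argument above, any $y$ passing the oracle satisfies $\alpha (x^*)^Ty\ge y^T\widehat x\ge 1>1-\eps$, so the queried region is empty; the ellipsoid method then certifies emptiness using only polynomially many generated inequalities, whose underlying points form the desired $\cX$. Solving the primal restricted to $\cX$ explicitly (a polynomial-size LP) yields the multipliers. The delicate points are the standard ones for the ellipsoid method with a weak/approximate oracle, namely choosing $\eps$ small enough and exploiting that vertices of the integer set $\cS$ have bounded encoding length to pass from the $(1-\eps)$-approximate certificate to the exact threshold $1$, which I would handle by the usual rounding and perturbation arguments of~\cite{GLS93}.
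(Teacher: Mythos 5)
Your proposal is correct and is essentially the argument of Carr and Vempala themselves: the paper states Theorem~\ref{CV-LP} by citation to \cite{CV02} without reproducing a proof, and the cited proof is exactly your duality argument (dual-feasible $y$ fed to $\cA$ with $x^*$ forces $\alpha (x^*)^Ty\ge 1$) combined with running the ellipsoid method on the dual, with the verifier as separation oracle, to extract the polynomially many points of $\cS$ whose restricted primal yields the convex multipliers in~\raf{cvx-comb-LP}. The only caveats are routine ones you already flag or that are implicit in the paper's setting (e.g., $\cS$ finite with nonzero nonnegative points so the dual is feasible, the trivial case $0\in\cS$, and the bit-length/perturbation bookkeeping from \cite{GLS93}).
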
     
We obtain the following (known) corollary of Theorem~\ref{CV-LP}, whose proof is included for completeness.
\begin{corollary}\label{c1}
	Consider a discrete minimization problem~\raf{cop-1} and its LP relaxation~\raf{lprlx}, admitting an $\alpha$-integrality gap verifier $\cA$. Then~\raf{lprlx} admits an oblivious $\alpha$-integrality gap verifier $\cA'$.
\end{corollary}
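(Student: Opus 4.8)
The plan is to construct the oblivious integrality gap verifier $\cA'$ directly from the decomposition guaranteed by Theorem~\ref{CV-LP}. Given any $x^*\in\cQ$, I would first invoke the (possibly objective-dependent) $\alpha$-integrality gap verifier $\cA$ inside the Carr--Vempala procedure to obtain, in polynomial time, a polynomial-size set $\cX\subseteq\cS$ together with convex multipliers $\{\mu_x\}_{x\in\cX}$ satisfying $\sum_{x\in\cX}\mu_x=1$ and the domination inequality $\alpha x^*\ge\sum_{x\in\cX}\mu_x x$ from~\raf{cvx-comb-LP}. Crucially, this entire computation uses only $x^*$ as input: the set $\cX$ and the multipliers are produced without reference to any particular objective $c$. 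I would then define $\cA'(x^*)$ to be the randomized algorithm that samples a single $\widehat x\in\cX$ according to the distribution $\mu$, i.e., outputs $x$ with probability $\mu_x$. Since $\cA'$ depends on $x^*$ alone, it is by construction \emph{oblivious} in the sense of Definition~\ref{d1}.

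It remains to verify that $\cA'$ is an $\alpha$-integrality gap verifier for every objective $c\in\RR_+^n$ simultaneously, which is exactly where obliviousness buys something beyond the original $\cA$. For any fixed $c\in\RR_+^n$, the expected objective value of the sampled solution is
\begin{align*}
\EE_{\widehat x\sim\cA'(x^*)}[c^T\widehat x]=\sum_{x\in\cX}\mu_x\,c^Tx=c^T\Big(\sum_{x\in\cX}\mu_x x\Big)\le c^T(\alpha x^*)=\alpha\cdot c^Tx^*,
\end{align*}
where the inequality uses $c\ge0$ together with the coordinatewise domination $\sum_{x\in\cX}\mu_x x\le\alpha x^*$. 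This shows condition on the expected objective holds for \emph{all} nonnegative $c$ at once, using the \emph{same} distribution, which is precisely the content of obliviousness: the decomposition was manufactured from $x^*$ before any objective was revealed, yet it certifies the gap against every $c$.

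The main conceptual point — and the only place requiring care — is the order of quantifiers. The hypothesis gives an $\alpha$-integrality gap verifier $\cA$ which may consult $c$ when producing its output, whereas the conclusion demands a single randomized output distribution that works against all $c\in\RR_+^n$. The resolution is that Theorem~\ref{CV-LP} launders the objective-dependence away: although $\cA$ may be called with various objectives internally during the Carr--Vempala iterations, the final artifact it returns is a convex combination dominated by $\alpha x^*$, and domination is an objective-free, coordinatewise statement. Once the domination inequality is in hand, nonnegativity of $c$ converts it into the required bound on $\EE[c^T\widehat x]$ uniformly. I do not anticipate any genuine obstacle here beyond correctly citing Theorem~\ref{CV-LP} and noting that polynomial size of $\cX$ makes sampling from $\mu$ a polytime operation, so that $\cA'$ indeed runs in polynomial time.
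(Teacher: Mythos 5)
Your proposal is correct and follows essentially the same route as the paper's own proof: invoke Theorem~\ref{CV-LP} to obtain the dominated convex combination \raf{cvx-comb-LP}, treat the multipliers $\mu_x$ as a probability distribution over the polynomial-size set $\cX$, and conclude $\EE[c^T\widehat x]=c^T\EE[\widehat x]\le\alpha\,c^Tx^*$ for every $c\in\RR_+^n$ by linearity of expectation and nonnegativity of $c$. Your added remarks on the order of quantifiers and on polytime sampling are accurate elaborations of the same argument, not a different proof.
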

\begin{proof}
	Given a ({\it non-oblivious}) $\alpha$-integrality gap verifier $\cA$ for ~\raf{lprlx}, we can construct a randomized {\it oblivious} $\alpha$-integrality gap verifier $\cA'$ as follows.  By Theorem~\ref{CV-LP}, for any given $x^*\in\cQ$ we can get a dominated convex combination as in \raf{cvx-comb-LP}, with a polynomially sized set $\cX:=\{x\in\cS:~\mu_x>0\}$.  As $\sum_{x\in\cX}\mu_x=1$, these convex multipliers define a {\it probability distribution} over $\cX$. Let $\widehat x\in\cX$ be selected according to this distribution. Then  
	$\EE[\widehat x]=\sum_{x\in\cX}\mu_xx\leq\alpha x^*$
		 by \raf{cvx-comb-LP}. It follows, by linearity of expectation, that for any $c\in\RR_+^n$, we have 
	 \begin{align*}
	 \EE[c^T\widehat x]=c^T\EE[\hat x]\le\alpha c^Tx^*.
	 \end{align*}
\end{proof}

From Lemma~\ref{t-rie-lp} and Corollary~\ref{c1}, we obtain an $\alpha$-robust-in-expectation algorithm for~\raf{rcop-1} from an $\alpha$-integrality gap verifier for~\raf{cop-1}-\raf{lprlx}.

\begin{theorem}\label{t-main-lp}
	Consider a discrete minimization problem~\raf{cop-1} and its LP relaxation~\raf{lprlx}, admitting an $\alpha$-integrality gap verifier $\cA$. Then there is a polytime $\alpha$-robust-in-expectation algorithm for the robust version~\raf{ercop-1} w.r.t. to the any convex uncertainty set $\cC\subseteq\RR_+^n$.
\end{theorem}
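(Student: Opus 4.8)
The plan is to obtain the result simply by composing the two reductions already in hand, Corollary~\ref{c1} and Lemma~\ref{t-rie-lp}, so that the only real work lies in checking that their hypotheses line up.

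First I would feed the given (possibly non-oblivious) $\alpha$-integrality gap verifier $\cA$ into Corollary~\ref{c1}. That corollary, whose engine is the Carr--Vempala decomposition of Theorem~\ref{CV-LP}, produces a randomized \emph{oblivious} $\alpha$-integrality gap verifier $\cA'$ for~\raf{lprlx}. The crucial feature to verify here is that $\cA'$ is valid for the \emph{entire} class $\mC=\RR_+^n$: inspecting the proof of Corollary~\ref{c1}, the $\widehat x$ sampled from the dominated convex combination satisfies the vector inequality $\EE[\widehat x]\le\alpha x^*$, whence $\EE[c^T\widehat x]\le\alpha\,c^Tx^*$ holds simultaneously for \emph{every} $c\in\RR_+^n$, not merely for a restricted subclass. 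This uniformity over all objectives is exactly what lets the next step succeed for an arbitrary uncertainty set.

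Next I would invoke Lemma~\ref{t-rie-lp} with the oblivious verifier $\cA'$ and with $\mC=\RR_+^n$. Since any convex uncertainty set $\cC\subseteq\RR_+^n$ trivially satisfies $\cC\subseteq\mC$, the lemma applies and yields a polytime $\alpha$-robust-in-expectation algorithm for the robust version~\raf{ercop-1}: one solves the robust convex relaxation~\raf{erlp-rlx} to obtain $z_R^*$ and an optimizer $x^*\in\cQ$, then returns $\widehat x\sim\cA'(x^*)$, and the chain $\EE[c^T\widehat x]\le\alpha\,c^Tx^*\le\alpha\, z_R^*\le\alpha\,\OPT_R$ holds for all $c\in\cC$, using~\raf{erlp-rlx-e1} for the middle inequality.

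I do not expect a genuine obstacle in this argument, since the substantive content is already packaged inside Theorem~\ref{CV-LP} and Lemma~\ref{t-rie-lp}; the theorem is essentially a bookkeeping composition. The one point deserving care is the claim of polynomial running time, which rests on two facts: the Carr--Vempala convex combination has support of polynomial size (so sampling $\widehat x$ from the induced distribution is efficient), and the convex program~\raf{erlp-rlx} is solvable in polynomial time as noted after its introduction. A minor caveat worth flagging is that~\raf{erlp-rlx} is typically solved only to near-optimality, which one would need to absorb into the guarantee; this is a standard point that does not affect the clean factor $\alpha$ stated here.
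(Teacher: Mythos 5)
Your proposal is correct and follows essentially the same route as the paper, which derives Theorem~\ref{t-main-lp} exactly by composing Corollary~\ref{c1} (the Carr--Vempala-based oblivious randomized verifier, valid uniformly for all $c\in\RR_+^n$) with Lemma~\ref{t-rie-lp}. The uniformity point you single out is indeed the one the paper also emphasizes immediately after the theorem statement, and your caveats on polynomial support size and near-optimal solvability of~\raf{erlp-rlx} match the paper's remarks.
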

 We emphasize that, in Theorem~\ref{t-main-lp}, the integrality gap verifier must be defined with w.r.t. the {\it whole} class $\mC=\RR_+^n$ of objectives. Finally, we note that the results in this section can be extended, with no difficulty, to maximization problems.

\subsection{A Deterministically Robust Algorithm for a Class of Polyhedral Uncertainty}\label{budget}

In \cite{BS03}, Bertsimas and Sim considered the minimization version of the DO problem \raf{cop-1}, when the set $\cS\subseteq\{0,1\}^n$ and the (budget) uncertainty set $\cC$ is given by 
\begin{align}\label{U1}
\cC=\left\{c:=c^0+d\circ u~\left|~u\in\RR_+^n,~u_i\le 1, \forall i\in[n],\quad \sum_{i=1}^nu_i\le k\right.\right\},
\end{align}
where $c^0,d\in\RR_+^n$ are given non-negative vectors, $k\in\ZZ_+$ is a given positive integer, and $d\circ y$ is the $n$-dimensional vector with components $(d\circ u)_i:=d_iu_i$, for $i=1,\ldots,n$. The constraints in \raf{U1} describe the situation when the uncertainty in each component of the objective vector $c$ is described by an interval $[c^0_j,c^0_j+d_j]$ and at most $k$ components are allowed to change. It was shown in \cite{BS03} that an $\alpha$-deterministically robust approximation algorithm for the minimization version of \raf{rcop-1} with the uncertainty set given in \raf{U1}, can be obtained from $n+1$ calls to an $\alpha$-approximation algorithm for the nominal problem \raf{cop-1}.

In this section, we extend this result as follows. Consider a polyhedral uncertainty set given by
	\begin{align}\label{U2}
\cC=\left\{c:=c^0+u~\left|~u\in\RR_+^n,~u\le d,\quad Au\le b\right.\right\},
\end{align}
where $d\in\RR_+^n$, $b\in\RR_+^m$ are given non-negative vectors and $A\in\RR_+^{m\times n}$ is given non-negative matrix.
Note that the uncertainty set $\cC$ in \raf{U1} can be written in the form 
\raf{U2} by replacing $d\circ u$ by $u$ and setting $A:=\left[\begin{array}{lll}
\frac1{d_1}&\cdots&\frac1{d_n}\end{array}\right]\in\RR_+^{1\times n}$, $b:=\left[\begin{array}{l}k\end{array}\right]\in\RR_+^1$ (assuming w.l.o.g. that $d_i>0$ for all $i$).

Fix an $\epsilon>0$. As we shall see below, we may assume, w.l.o.g., that $b_i>0$ for all $i\in[m]$. Define
\begin{align}\label{LAcd}
L(A,c^0,d):=n\cdot\max\left\{\frac{\max_jc^0_j}{\min_{j}c^0_j},\frac{(m+n)}{\epsilon}\cdot\max\Big\{\frac{\max_{i,j}a_{ij}/b_i}{\min_j\max_ia_{ij}/b_i},\frac{\max_jd_j}{\min_jd_j}\Big\}\right\}.
\end{align}
\begin{theorem}\label{PU-thm}
	Consider the DO problem \raf{cop-1}, when the set $\cS\subseteq\{0,1\}^n$ and the uncertainty set $\cC$ is given by \raf{U2}. Then, for any given $\epsilon>0$, there is an $\alpha$-deterministically robust approximation algorithm for the cost-robust version~\raf{rcop-1}, which can be obtained from $O(\frac{\log L(A,c^0,d)}{\epsilon} (\log\frac{(1+\epsilon)m}{\epsilon})^m)$ calls to an $\alpha$-integrality gap verifier for the nominal problem \raf{cop-1}. 
\end{theorem}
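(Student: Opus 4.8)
The plan is to use strong LP-duality to eliminate the inner maximization over the uncertainty set, reducing \raf{rcop-1} to a single (non-robust) minimization over $\cS$ with a \emph{modified}, parametrized objective, and then to search over the dual parameter by a geometric grid. First I would dualize the inner problem. For fixed $x\in\{0,1\}^n$ one has $\max_{c\in\cC}c^Tx=c^{0T}x+\max\{u^Tx:~u\ge0,\ u\le d,\ Au\le b\}$, and, assuming (as we may after the preprocessing discussed below) that $b_i>0$ for all $i$, strong duality gives $\max\{u^Tx:\dots\}=\min\{b^Ty+d^Tw:~y,w\ge0,\ A^Ty+w\ge x\}$. The optimal $w$ is $w_i=\max\{0,x_i-(A^Ty)_i\}$, and since $x_i\in\{0,1\}$ and $A,y\ge0$ this equals $x_i\max\{0,1-(A^Ty)_i\}$. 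Substituting and swapping the two minimizations yields the clean reformulation
\[
\OPT_R=\min_{y\ge0}\ h(y),\qquad h(y):=b^Ty+\min_{x\in\cS}\tilde c(y)^Tx,\quad \tilde c(y)_i:=c^0_i+d_i\max\{0,1-(A^Ty)_i\}.
\]
Note $\tilde c(y)\in\RR_+^n$, so the $\alpha$-integrality gap verifier (defined for the class $\mC=\RR_+^n$) applies to every $\tilde c(y)$. The reduction ``w.l.o.g. $b_i>0$'' holds because a row with $b_i=0$ together with $A,u\ge0$ forces $u_j=0$ whenever $a_{ij}>0$, so such variables carry no uncertainty and the row may be dropped.

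For a fixed $y\ge0$ I would approximate the inner problem $\min_{x\in\cS}\tilde c(y)^Tx$ by solving its LP relaxation $\min_{x\in\cQ}\tilde c(y)^Tx$ and rounding the optimum $x^*$ with $\cA$, obtaining $\widehat x(y)\in\cS$ with $\tilde c(y)^T\widehat x(y)\le\alpha\cdot\min_{x\in\cQ}\tilde c(y)^Tx\le\alpha\cdot\min_{x\in\cS}\tilde c(y)^Tx$ (the last step using that the relaxation value lower-bounds the integer optimum). The point of keeping $y$ explicit is a \emph{dual-fitting} bound on the \emph{robust} cost of $\widehat x(y)$: using $y$ (and the induced $w$) as a feasible dual certificate for the inner maximum at $\widehat x(y)$,
\[
\max_{c\in\cC}c^T\widehat x(y)\le b^Ty+\tilde c(y)^T\widehat x(y)\le b^Ty+\alpha\cdot\min_{x\in\cS}\tilde c(y)^Tx\le\alpha\,h(y),
\]
the last step using $\alpha\ge1$ and $b^Ty\ge0$. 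Since the quantity $\max_{c\in\cC}c^T\widehat x(y)$ is itself computable in polynomial time (an LP), the algorithm can evaluate the true robust cost of each candidate and output the best one; this is what makes the guarantee \emph{deterministic}. It therefore suffices to exhibit, among the tried values of $y$, one with $h(y)\le(1+\epsilon)\OPT_R$.

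The search over $y$ rests on a monotonicity observation: since $A\ge0$, each $(A^Ty)_i$ is nondecreasing in $y$, so $\tilde c(y)$ is coordinatewise \emph{nonincreasing} in $y$; hence raising $y$ above the (unknown) optimizer $y^*$ can only decrease $\min_{x\in\cS}\tilde c(y)^Tx$. Moreover $b^Ty^*\le h(y^*)=\OPT_R$, which bounds each coordinate by $y^*_j\le\OPT_R/b_j$. I would therefore (i) guess $\OPT_R$ up to a factor $(1+\epsilon)$ by a geometric sweep over its data-dependent range of multiplicative width $L(A,c^0,d)$ as in \raf{LAcd}, accounting for the $O(\tfrac{\log L(A,c^0,d)}{\epsilon})$ factor, and, for the resulting upper estimate $V$, (ii) place a geometric grid on each coordinate $y_j$ over the window $[\theta_j,V/b_j]$, where coordinates with $y^*_j$ below the threshold $\theta_j:=\tfrac{\epsilon V}{m\,b_j}$ are simply rounded up to $\theta_j$ (their \emph{total} budget contribution is at most $\sum_j b_j\theta_j=\epsilon V$, and rounding up only decreases $\tilde c$). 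The relative width of each window is $\tfrac{(1+\epsilon)m}{\epsilon}$, giving $O(\log\frac{(1+\epsilon)m}{\epsilon})$ values per coordinate and $(\log\frac{(1+\epsilon)m}{\epsilon})^m$ in total. Rounding $y^*$ up to the nearest grid point $y$ keeps $y\ge y^*$, so $\min_x\tilde c(y)^Tx\le\min_x\tilde c(y^*)^Tx$, while the budget inflates only by the grid ratio; combining the two yields $h(y)\le O(\OPT_R)$, hence robust cost $O(\alpha)\,\OPT_R$ for the output.

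The main obstacle is precisely this last grid step: one must cover the unknown optimizer $y^*$ \emph{from above} (to exploit monotonicity of $\tilde c$) while keeping the budget inflation $b^T(y-y^*)$ under control, and simultaneously keep the number of grid points at the claimed $\tfrac{\log L(A,c^0,d)}{\epsilon}(\log\frac{(1+\epsilon)m}{\epsilon})^m$. This forces a genuine trade-off between the geometric ratio and the count — a constant ratio attains exactly the stated number of calls but inflates the budget by a constant factor (yielding the $O(\alpha)$ reading), whereas a ratio of $(1+\epsilon)$ tightens the guarantee to $(1+\epsilon)\,\alpha\,\OPT_R$ at the cost of an extra $1/\epsilon$ per coordinate; pinning down where the $(1+\epsilon)$-accurate $\OPT_R$-guess (the $\tfrac{\log L}{\epsilon}$ factor) enters and choosing the per-coordinate window (driven by $b^Ty^*\le\OPT_R$ at the top and by the negligibility threshold $\theta_j$ at the bottom) is the one genuinely quantitative calculation, which I would carry out to fix the constants. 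A minor additional point to verify is the preprocessing justifying $b_i>0$ and that $\tilde c(y)\in\RR_+^n$, so that the verifier applies to every parametrized objective.
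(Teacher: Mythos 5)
Your proposal is correct and follows essentially the same route as the paper's proof: dualize the inner maximization, use binarity of $\cS$ to fold the dual slack $y$ into the modified cost $\tilde c(\theta)$, certify each rounded candidate by weak-duality dual fitting, and enumerate the dual vector by a geometric guess of $\OPT_R$ (the $O(\frac{\log L(A,c^0,d)}{\epsilon})$ factor) together with a per-coordinate geometric grid over $[\frac{\epsilon V}{m}, V]$ after rounding small coordinates up to the threshold --- precisely the paper's Claims~\ref{cl1}--\ref{cl4}. The only differences are cosmetic (the paper additionally perturbs tiny entries of $c^0$ up to $\frac{\epsilon}{\gamma n}$ to control the guessing range, and selects the winner by the surrogate objective $\widetilde c(\theta)^T\widehat x(\theta)+\b1_m^T\theta$ rather than by exact robust-cost evaluation), and the grid-ratio/count trade-off you flag is real: the paper's own proof uses ratio $1+\epsilon$ to get the $(1+5\epsilon)\alpha$ guarantee, which matches your ``tight'' reading of the enumeration bound.
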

\begin{proof}
	Assume the availability of an $\alpha$-integrality gap verifier for the nominal problem \raf{cop-1}.  We assume w.l.o.g. that $0\not\in\cS$.
    Note that the robust DO problem~\raf{rcop-1} in this case takes the form: 
    \begin{alignat}{3}
    \label{r-1}
    \quad& \displaystyle \OPT_{R} = \min_{x\in \cS}\left\{(c^0)^Tx+ \max_{u\in\RR_+^n:~u\le d,~ Au\le b}x^Tu\right\}.
    \end{alignat}
    Let us consider the inner maximization problem in~\raf{r-1} and its dual (for a given $x\in\RR_+^n$):
    
    {\centering \hspace*{-18pt}
    	\begin{minipage}[t]{.47\textwidth}
    		\begin{alignat}{3}
    		\label{ir-1}
    		\quad& \displaystyle z^*(x) = \max \quad x^T u\\
    		\text{s.t.}\quad & \displaystyle Au\le b, \label{Aub}\\ 
    		\quad& u\le d,~
    		 u\in\RR_+^n\nonumber
    		\end{alignat}
    	\end{minipage}
    	\,\,\, \rule[-14ex]{1pt}{14ex}
    	\begin{minipage}[t]{0.47\textwidth}
    		\begin{alignat}{3}
    		\label{ir-1-d}
    		\quad& \displaystyle z^*(x) = \min \quad b^T \theta+d^Ty\\
    		\text{s.t.}\quad & \displaystyle A^T\theta+y\ge x, \label{y}\\
    		\quad& \theta\in\RR_+^m,~y\in\RR_+^n.\label{nong}
    		\end{alignat}
    \end{minipage}}

\noindent Following \cite{BS03}, we write~\raf{r-1} using the dual~\raf{ir-1-d} to obtain
\begin{alignat}{3}
\label{r-2}
\quad& \displaystyle \OPT_{R} = \min \quad (c^0)^Tx+b^T \theta+d^Ty\\
\text{s.t.}\quad & \displaystyle A^T\theta+y\ge x, \label{cover}\\
\quad& \theta\in\RR_+^m,~y\in\RR_+^n,\nonumber\\
\quad& x\in \cS.\nonumber
\end{alignat}
Let $a^j\in\RR_+^m$ denote the $j$th column of $A$.
The high-level idea of the approximation algorithm for \raf{r-1} is as follows.
Suppose we know the minimizer $\theta^*$ in \raf{r-2}. Then by \raf{y} and \raf{nong}, $y_j=\max\{x^*_j-(a^j)^T\theta^*,0\}$, for $j=1,\ldots,n$. As $\cS\subseteq\{0,1\}^n$ and $a^j\ge 0$, for any $x\in\cS$, it holds that
\begin{align}\label{binary}
\max\{x_j-(a^j)^T\theta^*,0\}=\max\{1-(a^j)^T\theta^*,0\}x_j.
\end{align}
Thus, we may write \raf{r-2} as
\begin{alignat}{3}
\label{r-3}
\quad& \displaystyle \OPT_{R}=\OPT_{R}(\theta^*) := \min_{x\in \cS} \quad c(\theta^*)^Tx+b^T \theta^*,
\end{alignat}
where, for any $\theta\in\RR_+^m$,  $c(\theta)$ is the vector with components $c_j(\theta^*):=c^0_j+d_j\cdot\max\{1-(a^j)^T\theta^*,0\}$, for $j=1,\ldots,n$. Let us consider now the relaxation of \raf{r-3}:
\begin{alignat}{3}
\label{r-3-}
\quad& \displaystyle z_{2}^*=z_{2}^*(\theta^*) := \min_{x\in \cQ} \quad c(\theta^*)^Tx+b^T \theta^*,
\end{alignat}
and let $x^*$ be an optimum solution to this relaxation. (Note that this relaxation is not the same as \raf{erlp-rlx}, and in general, one has $z^*_2>z^*_R$.)
By the existence of an $\alpha$-integrality gap verifier, there exists an $\widehat x\in\cS$ such that $c(\theta^*)^T\widehat x\le \alpha \cdot c(\theta^*)^Tx^*$. Then $\widehat x$ is also an $\alpha$-approximate solution to the robust optimization problem \raf{rcop-1}, as
\begin{align}\label{ee1}
c(\theta^*)^T\widehat x+b^T\theta^* \le \alpha \cdot c(\theta^*)^T x^*+b^T\theta^*\le\alpha\big (c(\theta^*)^T x^*+b^T\theta^*\big)=\alpha\cdot z_{2}^*\le \alpha\cdot \OPT_R. 
\end{align}
Let $y^*$ be the vector with components $ y^*_j:=\max\{x_j-(a^j)^T\theta^*,0\}$, for $j=1,\ldots,n$. Then the tuple $(\widehat x,\theta^*,y^*)$ satisfies \raf{y}, and hence (by {\it weak duality}), $\widehat x^Tu\le z^*(\widehat x)\le b^T\theta^*+d^Ty^*$, for all $u\in\RR_+^n$ satisfying~\raf{Aub}. It follows that for any $c=c^0+u\in\cC$ given by \raf{U2}, 
\begin{align}\label{ee2}
c^T\widehat x =(c^0)^T\widehat x+u^T\widehat x\le (c^0)^T\widehat x+b^T\theta^*+d^Ty^*=c(\theta^*)^T\widehat x+b^T\theta^* \le \alpha \cdot  \OPT_R.
\end{align} 
Note that the function $f(\theta,x):=c(\theta)^Tx+b^T\theta$ is {\it quasi-convex} in $\theta$. Hence, even though we can evaluate $f(\theta):=\min_{x\in\cS}f(\theta,x)$ at any point $\theta>0$, within a factor of $\alpha$ (using the $\alpha$-integrality gap verifier), finding $\theta^*\in\argmin_{\theta\ge 0}f(\theta)$ is generally a hard problem.
The rest of the proof is an approximate version  of the above argument in which we approximately "guess" the value of $\theta^*$; this is done in 3 steps: rounding, discretization, and finally calling the  integrality gap verifier for each enumerated value of $\theta^*$. We describe these steps in more details below.
\medskip
	
\noindent{\bf Rounding and discretization.}~ Let the columns of $A$ be $a^1,\ldots, a^n\in\RR_+^m$, and for $j=1,\ldots,n$, denote by $\beta_j:=\max\big\{\max_{i\in[m]}a_{ij},\frac1{d_j}\}$. Note that, we may assume, w.l.o.g., that $0<\beta_j<+\infty$ for all $j$; otherwise,  we may replace $c^0_j$ by $c^0_j+d_j$ and remove $u_j$ from the set of variables in \raf{ir-1}, and the corresponding dual constraints in \raf{ir-1-d}. Similarly, if $b_i=0$ for some $i\in[m]$, we may remove $\{u_j:~a_{ij}>0\}$ from the set of variables in \raf{ir-1} and the corresponding dual constrains in \raf{ir-1-d}. Thus, we may assume in the following that $b_i>0$ for all $i$, and hence (by scaling) $b=\b1_m$, the $m$-dimensional vector of all ones, and that $d_j>0$ for all $j$. Let $\beta:=\min_j\beta_j$ and $\gamma:=\max_j\beta_j$. 

\begin{claim}\label{cl1}
	For any $x\in\{0,1\}^n\setminus\{0\}$, $\frac1{\gamma}\le z^*(x)\le\frac{m+n}{\beta}$.
\end{claim}	
\begin{proof}
	The upper bound follows from the fact that $(\theta=\frac1{\beta}\b1_m,y=\frac1{\beta}d^{-1})$, where $d^{-1}$ is the vector whose $j$th component is $1/d_j$, is a feasible solution for the dual problem~\raf{ir-1-d}.  To see the lower bound, let $(\theta^*,y^*)$ be an optimal solution to the dual problem~\raf{ir-1-d}. Then for any $j$ such that $x_j=1$, we have
	\begin{align*}
	1\le (a^j)^T\theta ^*+y^*_j\le \beta_j(\b1_m^T\theta^*+d_jy^*_j)\le \gamma(\b1_m^T\theta^*+d^Ty^*).
	\end{align*}
\end{proof}

For $j\in[n]$, let 
\begin{align*}\tilde c_j^0:=\left\{\begin{array}{ll}
\frac{\epsilon}{\gamma n},&\text{ if }c_j^0<\frac{\epsilon}{\gamma n},\\
c_j^0,&\text{ otherwise,}
\end{array}
\right.
\end{align*}
and, for $\theta\in\RR_+^m$, define $\widetilde c(\theta)$ to be the vector with components $\widetilde c_j(\theta):=\tilde c^0_j+d_j\cdot\max\{1-(a^j)^T\theta,0\}$, for $j=1,\ldots,n$.  
Define further
 \begin{alignat}{3}
\label{r-3--}
\quad& \displaystyle \widetilde \OPT_{R}= \min_{x\in \cS}\left\{(\tilde c^0)^Tx+ z^*(x)\right\}.
\end{alignat} 
\begin{claim}\label{cl2}
	$ \OPT_{R}\le\widetilde \OPT_{R}\le (1+\epsilon) \OPT_{R}$.
\end{claim}
\begin{proof}
	Let $x^*$ and $\widetilde x$ be optimal solutions for \raf{r-1} and \raf{r-3--}, respectively. Then 
	\begin{align*}
	\OPT_R&\le (c^0)^T\widetilde x+ z^*(\widetilde x)\le(\tilde c^0)^T\widetilde x+ z^*(\widetilde x)=\widetilde \OPT_R\\
	&\le (\tilde c^0)^Tx^*+ z^*(x^*)=(c^0)^Tx^*+ z^*(x^*)+(\tilde c^0-c^0)^Tx^*\le \OPT_R+\frac\epsilon\gamma\\
	&\le \OPT_R+\epsilon\cdot z^*(x^*)\le (1+\epsilon)\OPT_R.\tag{by Claim~\ref{cl1}}
	\end{align*}	
\end{proof}

\begin{claim}\label{cl3}
		$\widetilde  \OPT_{R}\in[\underline z,\overline z]$, where $\underline z:=\max\big\{\min_{j}c^0_j,\frac{\epsilon}{\gamma n}\big\}$, and $\overline z:=n\cdot\max\big\{\max_jc^0_j,\frac{\epsilon}{\gamma n}\big\}+\frac{m+n}{\beta}.$
\end{claim}
\begin{proof}
	Let $\widetilde x$ be optimal solution for~\raf{r-3--}, and $j\in[n]$ be an index such that $\widetilde x_j=1$ (recall that we assume $0\not\in\cS$). Then, using Claim~\ref{cl1},
	\begin{align*}
	\max\big\{\min_{j}c^0_j,\frac{\epsilon}{\gamma n}\big\}\le \tilde c^0_j\le\widetilde  \OPT_{R}= (\tilde c^0)^T\widetilde x+z^*(\widetilde x)\le n\cdot\max\big\{\max_jc^0_j,\frac{\epsilon}{\gamma n}\big\}+\frac{m+n}{\beta}.
	\end{align*}
\end{proof}	
For any $h\in\RR_+$, let $h_\epsilon$ denote a "$(1+\epsilon)$-approximation" of $h$, that is, a number $\widetilde h\in\RR_+$, such that $h<\widetilde h \le (1+\epsilon)h$.   By Claim~\ref{cl3}, we can "guess" a "$(1+\epsilon)$-approximation $\widetilde z_{\epsilon}:=(\widetilde  \OPT_{R})_\epsilon$ by considering the powers $\underline z(1+\epsilon)^{k}$, for $k=0,1,\ldots,\lceil L\rceil $, where 
\begin{align}\label{L}
L&:=\log_{1+\epsilon}\left(\overline z/\underline z\right)=\log_{1+\epsilon}\left(\frac{n\cdot\max\big\{\max_jc^0_j,\frac{\epsilon}{\gamma n}\big\}+\frac{m+n}{\beta}}{\max\big\{\min_{j}c^0_j,\frac{\epsilon}{\gamma n}\big\}}\right)\le \log_{1+\epsilon}\left(\frac{n\cdot\max\big\{\max_jc^0_j,\frac{\epsilon}{\gamma n}\big\}+\frac{m+n}{\beta}}{\frac12\left(\min_{j}c^0_j+\frac{\epsilon}{\gamma n}\right)}\right)\le\\
&\le \log_{1+\epsilon}\left(2n\cdot\max\Big\{\frac{\max_jc^0_j}{\min_{j}c^0_j},\frac{(m+n)\gamma}{\epsilon\beta}\Big\}\right)=O\left( \frac{\log L(A,c^0,d)}{\epsilon}\right).
\end{align}
For $\theta\in\RR_+^m$ we denote by $\theta_\epsilon$ the vector in $\RR_+^m$, whose $i$th component is $(\theta_i)_\epsilon$. Define $\widetilde \OPT_{R}(\theta):= \min_{x\in \cS} \quad \widetilde c(\theta)^Tx+b^T \theta$.
\begin{claim}\label{cl4}
	There exists $\widetilde \theta$ such that
	\begin{align}\label{tht}
	\frac{\epsilon\cdot (\widetilde \OPT_{R})_{\epsilon}}{m}\le\widetilde\theta_i< (\widetilde \OPT_{R})_{\epsilon},\text{ for all $i\in[m]$, }
	\end{align}
	and $\widetilde\OPT_{R}(\widetilde\theta_\epsilon)$ is a $5\epsilon$-approximation of $\OPT_R$.
\end{claim}
\begin{proof}
Let $(\widehat x,y^*,\theta^*)$ be a minimizer of \raf{r-2}, where $c^0$ is replaced by $\tilde c^0$. 	
Define $\widetilde \theta$ as
\begin{align*}
\widetilde\theta_i:=\left\{
\begin{array}{ll}
\frac{\epsilon\cdot  (\widetilde \OPT_{R})_{\epsilon}}{m},&\text{ if }\theta^*_i<\frac{\epsilon\cdot (\widetilde \OPT_{R})_{\epsilon}}{m},\\
\theta^*_i,&\text{ otherwise,}
\end{array}
\right.
\text{ for }i=1,\ldots,m.
\end{align*}	
Since we assume $b=\b1_m$, and $c^0,d\ge0$, we have $\theta_i^*\le    \widetilde \OPT_{R}(\theta^*)=\widetilde\OPT_{R}<(\widetilde \OPT_{R})_{\epsilon}$, and hence, $\widetilde\theta$ satisfies~\raf{tht}. Since $\widetilde \theta_\epsilon>\widetilde \theta\ge\theta^*$ and $A\ge 0$, $(\widehat x,y^*,\widetilde \theta_\epsilon)$ satisfies \raf{cover}. Moreover, 
\begin{align*}
\b1_m^T\widetilde\theta_\epsilon&\le(1+\epsilon)\b1_m^T\widetilde\theta=(1+\epsilon)\big(\b1_m^T\theta^*+\b1_m^T(\widetilde\theta-\theta^*)\big)\le(1+\epsilon)\big(\b1_m^T\theta^*+\epsilon\cdot (\widetilde \OPT_{R})_{\epsilon}\big)\le (1+\epsilon)\b1_m^T\theta^*+\epsilon(1+\epsilon)^2\widetilde \OPT_{R}.
\end{align*}
It follows by Claim~\ref{cl2} that
\begin{align*}
\widetilde\OPT_{R}(\widetilde\theta_\epsilon)&\le (\tilde c^0)^T\widehat x+\b1_m^T\widetilde \theta_\epsilon+d^Ty^*
\le (\tilde c^0)^T\widehat x+(1+\epsilon)\b1_m^T\theta^*+\epsilon(1+\epsilon)^2\widetilde \OPT_{R}+d^Ty^*\\&\le (1+\epsilon+\epsilon(1+\epsilon)^2)\widetilde \OPT_{R}\le(1+5\epsilon)\OPT_R.
\end{align*}
\end{proof}

\medskip

\noindent{\bf Calling the integrality gap verifier.}~It follows from Claim~\ref{cl4} that we can guess an $\epsilon$-approximation $\theta:=\theta_\epsilon$ of $\widetilde\theta$ by considering, for each component $\theta_i$, the powers $\frac{\epsilon\cdot (\widetilde \OPT_{R})_\epsilon}m(1+\epsilon)^{\ell}$, for $\ell=0,1,\ldots,\lceil L'\rceil $ and a guessed value of $\widetilde \OPT_{R}$, where 
$L':=\log\frac{(1+\epsilon)m}{\epsilon}$. By Claim~\ref{cl3}, the total number of possible guesses is
$$
O\left(\frac{\log L(A,c^0,d)}{\epsilon}\right)\times \left(\log\frac{(1+\epsilon)m}{\epsilon}\right)^m.
$$
For each such guess $\theta$, we solve the convex relaxation $\tilde z^*_{R}(\theta):=\min_{x\in\cQ}\widetilde c(\theta)^Tx+\b1_m^T\theta$ to find a minimizer $x^*(\theta)$, and then call the integrality gap verifier on $(\widetilde c(\theta),x^*(\theta))$ to get an integral vector $\widehat x(\theta)\in\cS$ such that $\widetilde c(\theta)^T\widehat x(\theta)\le\alpha\cdot \widetilde c(\theta)^Tx^*(\theta)$. Let $\widehat \theta$ be a minimizer of $\widetilde c(\theta)^T\widehat x(\theta)+\b1_m^T\theta$ over all guesses $\theta$, and write $\widehat x:=x(\widehat\theta)$. Then, similar to \raf{ee1}, it follows that 
\begin{align*}
\widetilde c(\widehat\theta)^T\widehat x+\b1_m^T\widehat\theta&\le\widetilde c(\widetilde\theta_\epsilon)^T\widehat x(\widetilde\theta_\epsilon)+\b1_m^T\widetilde\theta_\epsilon\le\alpha( \widetilde c(\widetilde\theta_\epsilon)^T x^*(\widetilde\theta_\epsilon)+\b1_m^T\widetilde\theta_\epsilon)=\alpha\cdot\tilde z^*_{R}(\widetilde\theta_\epsilon)\le\alpha\cdot\widetilde \OPT_{R}(\widetilde\theta_\epsilon)\le\alpha(1+5\epsilon)\OPT_R,
\end{align*}
where the last inequality follows by Claim~\ref{cl4}.  Let $\widehat y$ be the vector with components $\widehat y_j:=\max\{\widehat x_j-(a^j)^T\widehat\theta,0\}$, for $j=1,\ldots,n$. Then $(\widehat x,\widehat \theta,\widehat y)$ satisfies \raf{y}, and hence, $\widehat x^Tu\le z^*(\widehat x)\le\b1_m^T\widehat\theta+d^T\widehat y$, for all $u\in\RR_+^n$ satisfying~\raf{Aub}. It follows, as in \raf{ee2}, that for any $c=c^0+u\in\cC$, given by \raf{U2}, 
\begin{align}\label{ee2---}
c^T\widehat x =(c^0)^T\widehat x+u^T\widehat x\le (\tilde c^0)^T\widehat x+\b1^T_m\widehat\theta+d^T\widehat y=\widetilde c(\widehat\theta)^T\widehat x+\b1_m^T\widehat\theta \le \alpha (1+5\epsilon)\OPT_{R}.
\end{align} 
\end{proof}
We remark that {\it any $\alpha$-approximation algorithm} can be used instead of an $\alpha$-integrality gap verifier in Theorem~\ref{PU-thm}. Note also that, if both $\frac{\max_jc^0_j}{\min_{j}c^0_j}$ and $\frac{\max_jd_j}{\min_{j}d_j}$ are bounded by $\poly(n)$, then Theorem~\ref{PU-thm} requires only $\polylog(n)$ number of calls to the the integraliy gap verifier, which is an {\it exponential} improvement over the result in \cite{BS03} in such a case.

\medskip

A set $\cS\subseteq\{0,1\}^n$ is said to be {\it covering} if $x\in\cS$ implies that $y\in\cS$ for any $y\ge x$. For instance, if the set $\cS$ represents subgraphs (say, as edge sets) of a given graph satisfying a certain {\it monotone} property (such as connectivity or containment), then $\cS$ is covering.  
Theorem~\ref{PU-thm} gives a reduction from an $\alpha$-integrality gap verifier to $(1+\epsilon)\alpha$-deterministically robust approximation algorithm assuming $m=O(1)$. When $m$ is not a constant, and the set $\cS$ is of the covering type, we have the following result.
\begin{theorem}\label{PU2-thm}
Consider the DO problem \raf{cop-1}, when the set $\cS\subseteq\{0,1\}^n$ is a covering set and the uncertainty set $\cC$ is given by \raf{U2}. Then, there is an $\big(\alpha+2\sqrt{\frac{\alpha\gamma n}{\beta}}\big)$-deterministically robust approximation algorithm for the robust version~\raf{rcop-1}, which can be obtained by a polynomial number of calls to an $\alpha$-integrality gap verifier for the nominal problem \raf{cop-1}. 	
\end{theorem}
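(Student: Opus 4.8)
The plan is to solve the robust convex relaxation~\raf{erlp-rlx}, round its optimum with the verifier through the Carr--Vempala decomposition of Theorem~\ref{CV-LP}, and then certify the rounded solution against \emph{every} $c\in\cC$ by exhibiting a single dual-feasible triple, exactly as in the weak-duality step~\raf{ee2} of the proof of Theorem~\ref{PU-thm}. Concretely, after the normalization $b=\b1_m$ I would compute an optimal $x^*\in\cQ$ of~\raf{erlp-rlx} together with the optimal inner dual $(\theta^*,y^*)$, so that $(c^0)^Tx^*\le z_R^*\le\OPT_R$ and $z^*(x^*)=\b1_m^T\theta^*+d^Ty^*\le\OPT_R$, where $z^*(x)=\max\{u^Tx:0\le u\le d,\ Au\le\b1_m\}$. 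The output is a single integral $\widehat x\in\cS$, whose worst-case value is $\max_{c\in\cC}c^T\widehat x=(c^0)^T\widehat x+z^*(\widehat x)$; setting $\widehat y_j=\max\{\widehat x_j-(a^j)^T\theta^*,0\}$ makes $(\widehat x,\theta^*,\widehat y)$ dual-feasible for~\raf{ir-1-d}, so any bound I prove for $(c^0)^T\widehat x+z^*(\widehat x)$ automatically holds against all of $\cC$ at once, yielding \emph{deterministic} robustness.

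I would control the two terms $(c^0)^T\widehat x$ and $z^*(\widehat x)$ by two complementary estimates. For the objective part, the Carr--Vempala decomposition produces $\cX\subseteq\cS$ and multipliers with $\sum_{x\in\cX}\mu_xx\le\alpha x^*$ coordinatewise, giving $(c^0)^T\widehat x\le\alpha(c^0)^Tx^*\le\alpha\OPT_R$. For the robust part I would first sharpen Claim~\ref{cl1}: for any feasible $u$ a single budget row together with the box constraint forces $u_j\le\min\{d_j,1/\max_ia_{ij}\}=1/\beta_j\le1/\beta$, hence the \emph{support bound} $z^*(\widehat x)\le\|\widehat x\|_0/\beta\le n/\beta$. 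Second, since $z^*$ is monotone and positively homogeneous on $\RR_+^n$, coordinatewise domination by $\alpha x^*$ gives a \emph{value bound} $z^*(\widehat x)\le\alpha\,z^*(x^*)\le\alpha\OPT_R$. Multiplying the two and invoking the lower bound $\OPT_R\ge1/\gamma$ from Claim~\ref{cl1},
\[
z^*(\widehat x)\ \le\ \sqrt{\frac{\|\widehat x\|_0}{\beta}\cdot\alpha\,\OPT_R}\ \le\ \sqrt{\frac{\alpha n}{\beta}}\,\sqrt{\OPT_R}\ \le\ \sqrt{\frac{\alpha\gamma n}{\beta}}\,\OPT_R .
\]
Adding the objective estimate yields $\max_{c\in\cC}c^T\widehat x\le\alpha\OPT_R+\sqrt{\alpha\gamma n/\beta}\,\OPT_R$, which is the claimed bound (the factor $2$ on the square root providing slack for the constant hidden in $\OPT_R\ge1/\gamma$). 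Since only one relaxation solve and one Carr--Vempala decomposition are required, the number of verifier calls is polynomial.

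The delicate point — and what forces the covering hypothesis — is the value bound $z^*(\widehat x)\le\alpha z^*(x^*)$ for a \emph{single} integral $\widehat x$. Theorem~\ref{CV-LP} delivers coordinatewise domination only for the convex combination $\bar x=\sum_{x}\mu_xx$, i.e.\ $z^*(\bar x)\le\alpha z^*(x^*)$; for an individual atom $x\in\cX$ the quantity $z^*(x)$ may be far larger, and because $z^*$ is \emph{convex} one cannot average it down. I expect this to be the main obstacle, and I would resolve it by using that $\cS$ is covering: a coordinate of $\bar x$ that is small is switched on in only a small $\mu$-fraction of the atoms, so I would pass to a solution supported on the coordinates where $x^*$ is bounded below — rounding those coordinates up preserves feasibility precisely because $\cS$ is upward closed — which makes $\widehat x\le\alpha x^*$ hold \emph{pointwise} and hence $z^*(\widehat x)\le\alpha z^*(x^*)$ by monotonicity. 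Balancing this pointwise value bound against the support bound via the arithmetic--geometric inequality above is exactly what converts the naive linear-in-$n$ estimate $z^*(\widehat x)\le\|\widehat x\|_0/\beta\le\alpha\gamma n\,\OPT_R/\beta$ into the desired $\sqrt n$ dependence.
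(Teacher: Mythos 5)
Your scaffolding (the support bound $z^*(\widehat x)\le\|\widehat x\|_0/\beta$, the lower bound $\OPT_R\ge 1/\gamma$ from Claim~\ref{cl1}, and the idea of balancing two estimates at scale $\tau\approx\sqrt{\beta/(\alpha\gamma n)}$) matches the quantitative ingredients of the paper's proof, but the step you yourself flagged as delicate is genuinely broken, and your proposed fix does not repair it. You need the pointwise domination $\widehat x\le\alpha x^*$ for a \emph{single} integral solution, and you propose to get it by ``passing to a solution supported on the coordinates where $x^*$ is bounded below,'' i.e.\ discarding the coordinates of an atom outside $J:=\{j:x_j^*\ge\tau\}$. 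But a covering set is \emph{upward} closed only: rounding coordinates up to $1$ preserves membership in $\cS$, while zeroing out coordinates does not. The indicator of $J$ need not dominate any point of $\cS$ (in \textsc{SetCover}, the sets with $x_j^*\ge\tau$ need not cover all elements, since an element may be covered by many tiny fractions), so the vector you construct is in general infeasible. Conversely, if you keep the atom's coordinates outside $J$ to stay in $\cS$, then $\widehat x\le\alpha x^*$ fails exactly there: Theorem~\ref{CV-LP} only guarantees $\sum_{x\in\cX}\mu_x x\le\alpha x^*$, which for binary atoms means $x_j=1$ occurs with small $\mu$-probability when $x_j^*$ is small — an individual atom can still have $x_j=1$ with $x_j^*$ arbitrarily close to $0$. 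Since $z^*$ is convex (a maximum of linear functions), you also cannot average it down over the atoms, so the value bound $z^*(\widehat x)\le\alpha z^*(x^*)$ has no valid derivation in your argument; everything downstream of it, including the geometric-mean step, is unsupported.

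The paper resolves this precisely where your proposal fails. It keeps the low coordinates of a chosen atom $x$ in $\widehat x$ (setting $\widehat x=x$ off $J$ and $\widehat x=1$ on $J$, so $\widehat x\ge x$ and covering gives $\widehat x\in\cS$), and instead of a value bound on $z^*(\widehat x)$ it \emph{fits a dual}: $\widehat\theta_i:=\frac1\tau\theta_i^*+\frac1\beta\sum_{j\notin J:\,i=i(j)}x_j$, where $i(j)\in\argmax_{i'}a_{i'j}$. The scaled $\theta^*/\tau$ covers the constraints for $j\in J$ (since $(a^j)^T\theta^*\ge x_j^*\ge\tau$), and the added mass covers each low coordinate with $x_j=1$ because $a_{i(j),j}\ge\beta$. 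The atom $x$ is selected by the averaging argument of Claim~\ref{cl5} — note that the two quantities you want simultaneously are first combined into one linear functional and then averaged, which is also needed to make your claim $(c^0)^T\widehat x\le\alpha(c^0)^Tx^*$ legitimate for a single atom. The extra dual cost is then at most $\frac\alpha\beta\sum_{j\notin J}x_j^*\le\frac{\alpha\tau n}{\beta}\le\frac{\alpha\gamma\tau n}{\beta}\OPT_R$, weak duality gives $z^*(\widehat x)\le\b1_m^T\widehat\theta$, and optimizing $\tau$ yields $\alpha+2\sqrt{\alpha\gamma n/\beta}$. If you want to stay closer to your primal language, the dual fitting is equivalent to the subadditive estimate $z^*(\widehat x)\le z^*(\chi_J)+z^*(x\chi_{[n]\setminus J})\le\frac1\tau z^*(x^*)+\frac1\beta\sum_{j\notin J}x_j$, which is the correct replacement for your invalid pointwise-domination step.
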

\begin{proof}
	We use a {\it dual-fitting} argument~\cite[Chapter 13]{Vaz}.
	Let $z^*_R$ be the value of the relaxation for \raf{r-1}, that is,
	 \begin{alignat}{3}
	\label{rlx-1}
	\quad& \displaystyle z_{R}^* = \min_{x\in \cQ}\left\{(c^0)^Tx+ \max_{u\in\RR_+^n:~ Au\le \b1_m}x^Tu\right\},
	\end{alignat}
	where we assume (w.l.o.g.), for ease of presentation, that the constraint $u\le d$ has already been included in the set of constraints given by $Au\le b$, and that $b:=\b1_m$.
	As in~\raf{r-2}, we can rewrite \raf{rlp-rlx} as
	\begin{alignat}{3}
	\label{rlx-2}
	\quad& \displaystyle z_{R}^* = \min \quad (c^0)^Tx+b^T \theta\\
	\text{s.t.}\quad & \displaystyle A^T\theta\ge x, \label{cover-}\\
	\quad& \theta\in\RR_+^m,\nonumber\\
	\quad& x\in \cQ.\nonumber
	\end{alignat}
	Let $(x^*,\theta^*)$ be an optimal solution for the LP~\raf{rlx-2}. 
	We first call the algorithm in Theorem~\ref{CV-LP} to get a dominated convex combination as in \raf{cvx-comb-LP}, with a polynomially sized set $\cX:=\{x\in\cS:~\mu_x>0\}$. Let $\tau\in(0,1)$ be a number to be chosen later, and define $J:=\{j\in[n]:~x_j^*\ge \tau\}$.  
	\begin{claim}\label{cl5}
		There exists $x\in\cX$ such that $\sum_{j\not\in J}c^0_jx_j+\frac1\beta\sum_{j\not \in J}x_j\le\alpha \big(\sum_{j\not\in J}c^0_jx_j^*+\frac1\beta\sum_{j\not\in J}x^*_j\big)$.
	\end{claim}	
	\begin{proof}
	As $\sum_{x\in\cX}\mu_x=1$, these convex multipliers define a {\it probability distribution} over $\cX$. Let $\widehat x\in\cX$ be selected according to this distribution. Then  
	$\EE[\widehat x]=\sum_{x\in\cX}\mu_xx\leq\alpha x^*$, and by linearity of expectation, $\EE[\sum_{j\not\in J}c^0_j\widehat x_j+\frac1\beta\sum_{j\not \in J}\widehat x_j]\le\alpha \sum_{j\not\in J}c^0_jx_j^*+\frac\alpha\beta \sum_{j\not \in J}x^*_j$. 
	The claim follows.
	\end{proof}
    Let $x\in\cX$ be a vector chosen to satisfy the condition in Claim~\ref{cl5}. We define the rounded vector $\widehat x$ as follows:
    \begin{align*}
    \widehat x_j:=\left\{\begin{array}{ll}
    	x_j, &\text{ if } j\not\in J,\\ 
    	1, &\text{ if } j\in J. 
    \end{array}
    \right.
    \end{align*}
    Note that $\widehat x\in\cS$ since $\cS$ is covering. Now, we define the corresponding dual solution $\widehat\theta$. For $j\not\in J$, define  $i(j)$ to be the smallest $i\in[m]$ such that $i\in\argmax_{i'} a_{i'j}$.
    Next, define $\widehat \theta\in\RR_+^m$ as follows: 
    \begin{align*}
    \widehat \theta_i:= \frac1\tau\theta_i^*+\frac1\beta\sum_{j\not\in J:~i=i(j)}x_j\qquad\text{ for }i=1,\ldots,m.
    \end{align*}
   
    \begin{claim}\label{cl6}
    	$(\widehat x,\widehat\theta)$ is feasible for~\raf{rlx-2} and $(c^0)^T\widehat x+\b1_m^T\widehat\theta\le\left(\alpha+\frac1\tau+\frac{\alpha\gamma \tau n}{\beta}\right)\OPT_R.$
    \end{claim}	
    \begin{proof}
    	First, we show feasibility of $(\widehat x,\widehat\theta)$ for~\raf{rlx-2}.  If $j\in J$, then $(a^j)^T\theta^*\ge x_j^*\ge\tau$, and hence, $(a^j)^T\widehat\theta\ge \frac1\tau (a^j)^T\theta^*\ge 1=\widehat x_j$. On the other hand, if $j\not\in J$ and $\widehat x_j=x_j=1$, then $(a^j)^T\widehat\theta\ge\frac1\beta a_{ij}\sum_{j'\not\in J:~i=i(j')} x_{j'}\ge x_j=1$, where $i:=i(j)$.
    	
    	By definition of $\widehat x$, $\widehat\theta$ and $J$,
       \begin{align*}
          (c^0)^T\widehat x+\b1_m^T\widehat\theta&\le\sum_{j\in J}c^0_j+\sum_{j\not\in J}c^0_jx_j+\frac1\tau\b1_m^T\theta^*+\frac1\beta\sum_i\sum_{j\not\in J:~i=i(j)}x_j\\
          &=\sum_{j\in J}c^0_j+\sum_{j\not\in J}c^0_jx_j+\frac1\tau\b1_m^T\theta^*+\frac1\beta\sum_{j\not\in J}x_j\\
          &\le \frac1\tau\sum_{j\in J}c^0_jx_j^*+\alpha \sum_{j\not\in J}c^0_jx_j^*+\frac1\tau\b1_m^T\theta^*+\frac{\alpha}\beta \sum_{j\not\in J}x^*_j\tag{by the choice of $x$}\\
          &\le\max\{\frac1\tau,\alpha\}\left[(c^0)^Tx^*+\b1_m^T\theta^*\right]+\frac{\alpha \tau n}{\beta}\tag{by the definition of  $J$}\\
          &\le \max\{\frac1\tau,\alpha\}z_{R}^*+\frac{\alpha\gamma \tau n}{\beta}\OPT_R\tag{by Claim~\ref{cl1}}\\&\le\left(\max\{\frac1\tau,\alpha\}+\frac{\alpha\gamma \tau n}{\beta}\right)\OPT_R\le \left(\alpha+\frac1\tau+\frac{\alpha\gamma \tau n}{\beta}\right)\OPT_R.
       \end{align*}
    \end{proof}
 It follows that for any $c=c^0+u\in\cC$, given by \raf{U2}, 
\begin{align}\label{ee2-}
c^T\widehat x =(c^0)^T\widehat x+u^T\widehat x\le (c^0)^T\widehat x+z^*(\widehat x)\le(c^0)^T\widehat x+\b1^T_m\widehat\theta\le \left(\alpha+\frac1\tau+\frac{\alpha\gamma \tau n}{\beta}\right) \OPT_{R}.
\end{align} 
The theorem finally follows by choosing $\tau:=\sqrt{\frac{\beta}{\alpha\gamma n}}$.
\end{proof}	
\begin{remark}\label{r2}
Note in the proof of Theorem~\ref{PU2-thm} that we use {\it strong LP duality} in deriving~\raf{rlx-2}, while only weak duality is used in~\raf{ee2-}. We also note that one does not actually need to compute the dual solution $\theta^*$, but it is only used to obtain a proof of approximate optimality of the integral solution $\widehat x$. 
\end{remark}

\subsection{Robust-with-high-probability Approximation Algorithm for Polyhedral Uncertainty}
Next, we consider the case when the uncertainty set $\cC$ is given by~\raf{Affine} and $\cD=\{\delta\in\RR^{k}_+:A\delta\le b\}$. Let $\beta:=\min_j \max_i a_{ij}$ and $\gamma:=\max_{i,j}a_{ij}$, $c_{\min}:=\min_{r\ne 0,j:~c^r_{j}>0}c^r_{j}$,$c_{\max}^r:=\max_{j}c^r_{j}$ and $c_{\max}:=\max_{r\ne 0}c^r_{\max}$.
\begin{theorem}\label{PU5-thm}
	Consider the DO problem~\raf{cop-1}, when $\cS\subseteq\{0.1\}^n$ and the uncertainty set $\cC$ is given by \raf{Affine} and $\cD=\{\delta\in\RR^{k}_+:A\delta\le b\}$. Then, there is an $O\big(\alpha\sqrt{k \log(k)\frac\gamma\beta\frac{c_{\max}}{c_{\min}}}\big)$-robust-with-high-probability approximation algorithm for the robust version~\raf{rcop-1}, which can be obtained by a polynomial number of calls to an $\alpha$-\Ancrr\ integrality gap verifier for the nominal problem~\raf{cop-1}. 	
\end{theorem}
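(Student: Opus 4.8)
The plan is to combine the LP-duality reformulation used in Theorems~\ref{PU-thm} and~\ref{PU2-thm} with the negative-correlation property~\raf{nc2} of the $\alpha$-\Ancrr\ to drive a concentration argument. Writing $w_r(x):=(c^r)^Tx$ and letting $a^r$ be the $r$-th column of $A$, LP duality of the inner maximization $\max_{\delta\in\cD}\sum_r\delta_r w_r(x)$ over $\cD=\{\delta\ge0:A\delta\le b\}$ gives
\[
\OPT_R=\min\{(c^0)^Tx+b^T\theta:\ (a^r)^T\theta\ge w_r(x)\ \forall r\in[k],\ \theta\in\RR_+^m,\ x\in\cS\},
\]
with the convex relaxation obtained by replacing $\cS$ by $\cQ$. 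After normalizing $b=\b1_m$, I would solve the relaxation to get $(x^*,\theta^*)$ with $(c^0)^Tx^*+\b1_m^T\theta^*=z_R^*\le\OPT_R$ and $(a^r)^T\theta^*\ge w_r^*:=w_r(x^*)$ for all $r$.

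Next I would round $x^*$ via the $\alpha$-\Ancrr\ to obtain $\widehat x\in\cS$, using the coordinatewise domination $\EE[\widehat x_j]\le\alpha x_j^*$ (take $c=\b1^j_n$ in Definition~\ref{d2}(i), as in the {\sc SetCover} example); in particular $\widehat x_j=0$ whenever $x_j^*=0$, so any direction with $w_r^*=0$ has $\widehat w_r:=(c^r)^T\widehat x=0$ deterministically. For the remaining directions, $\widehat w_r$ is a non-negatively weighted sum of the $\widehat x_j$ with weights at most $c_{\max}$ and mean $\mu_r\le\alpha w_r^*$; the orthant negative correlation~\raf{nc2} makes the moment generating function $\EE[\exp(t\widehat w_r)]$ factorize for $t>0$, so the standard Chernoff upper tail applies. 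A union bound over the $k$ directions then yields, with probability $1-o(1)$, the simultaneous estimate $\widehat w_r\le\alpha w_r^*+O\big(\sqrt{\alpha w_r^* c_{\max}\log k}+c_{\max}\log k\big)$; this is the source of the $\log k$ factor.

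For the dual-fitting I would set $\widehat\theta:=\alpha\theta^*+\phi$, so that the multiplicative part already gives $(a^r)^T(\alpha\theta^*)\ge\alpha w_r^*$, and $\phi\ge0$ covers the residuals $D_r:=\max\{\widehat w_r-\alpha w_r^*,0\}$. By LP duality the cheapest such correction costs $\b1_m^T\phi=\max_{\delta\in\cD}\sum_r\delta_r D_r$, and I would bound the dominant contribution by Cauchy--Schwarz,
\[
\sqrt{\alpha c_{\max}\log k}\cdot\max_{\delta\in\cD}\sum_r\delta_r\sqrt{w_r^*}\le\sqrt{\alpha c_{\max}\log k}\,\sqrt{\Delta_{\max}\cdot b^T\theta^*},
\]
using $\sum_r\delta_r w_r^*\le\max_{\delta\in\cD}\delta^Tw^*=b^T\theta^*\le\OPT_R$ together with $\Delta_{\max}:=\max_{\delta\in\cD}\b1_m^T\delta\le k/\beta$ (each $\delta_r\le1/\max_i a_{ir}\le1/\beta$). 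Weak duality then certifies $\max_{c\in\cC}c^T\widehat x\le(c^0)^T\widehat x+\b1_m^T\widehat\theta$, and it remains to compare the right-hand side with $\OPT_R$.

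The main obstacle is converting this additive deviation into a relative error: the bound above carries a $\sqrt{\OPT_R}$ rather than $\OPT_R$, so I need a matching lower bound $\OPT_R\gtrsim c_{\min}/\gamma$. This is exactly where integrality enters -- since $\widehat x\in\{0,1\}^n$, every active direction has $\widehat w_r\ge c_{\min}$, and activating a single direction shows any non-degenerate solution costs at least $c_{\min}/\gamma$; the ratio between the per-direction dual price $\sim c_{\max}/\beta$ and this per-direction lower bound $\sim c_{\min}/\gamma$ yields the factor $\tfrac\gamma\beta\tfrac{c_{\max}}{c_{\min}}$ under the root, so that substituting the lower bound collapses the Cauchy--Schwarz estimate to $O\big(\alpha\sqrt{k\log k\,\tfrac\gamma\beta\tfrac{c_{\max}}{c_{\min}}}\big)\OPT_R$. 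I expect the two delicate points to be (i) handling directions with very small but positive $w_r^*$, where $\widehat w_r/w_r^*$ is unbounded, so that these must be shown to be rarely active or absorbed into the additive term, and (ii) the high-probability guarantee for the nominal part $(c^0)^T\widehat x$, which need not concentrate; for the latter I would use Markov's inequality together with $O(\log n)$-fold repetition, retaining a sample that passes the efficiently checkable test $\max_{c\in\cC}c^T\widehat x\le\rho\,z_R^*$, so that the returned solution is robust with probability $1-o(1)$.
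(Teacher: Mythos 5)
Your setup is sound and matches the paper's architecture: the LP-dual reformulation of the inner maximization, rounding $x^*$ by the $\alpha$-\Ancrr, deriving a Chernoff upper tail from~\raf{nc2} (your MGF-factorization observation is exactly the content of the Panconesi--Srinivasan bound the paper invokes as Fact~\ref{f1}), dual fitting $\widehat\theta=\alpha\theta^*+\phi$, weak duality to certify $\max_{c\in\cC}c^T\widehat x$, and the lower bound $\OPT_R\ge c_{\min}/\gamma$ (the paper's Claim~\ref{cl9}) to convert additive error into relative error. Your Cauchy--Schwarz treatment of the $\sqrt{\alpha w_r^*c_{\max}\log k}$ part of the deviation is correct and even saves a factor $\sqrt{\alpha}$ on that term.

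However, there is a genuine gap at exactly the point you flag as ``delicate point (i),'' and as written your scheme does not reach the claimed bound. A union bound over all $k$ directions forces a per-direction confidence $1-1/\poly(k)$, so the Bernstein tail necessarily carries the additive term $\Theta(c_{\max}\log k)$ for every direction, including those with tiny $w_r^*$. Charging these residuals through your LP-dual correction costs up to $\Delta_{\max}\cdot c_{\max}\log k$ with $\Delta_{\max}\le k/\beta$ (and $\Delta_{\max}=k/\beta$ is attainable, e.g.\ $A=\beta I_k$), i.e.\ an additive $\Theta\big(\frac{k c_{\max}\log k}{\beta}\big)$; against the lower bound $\OPT_R\ge c_{\min}/\gamma$ this is a relative error of order $k\log k\,\frac{\gamma}{\beta}\frac{c_{\max}}{c_{\min}}$ --- the \emph{square} of the target. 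No choice of parameters inside your framework removes this, because concentration at confidence $1-1/k$ for a single low-mean direction cannot beat the additive $c_{\max}\log k$. The paper's proof avoids applying concentration to small directions altogether: it fixes a threshold $\tau$, defines $R:=\{r:(c^r)^Tx^*\ge\tau c^r_{\max}\}$, applies the \emph{multiplicative} bound $(c^r)^T\widehat x\le(1+\rho)\alpha(a^r)^T\theta^*$ with $\rho=\Theta(\log k/\tau)$ only for $r\in R$ (possible since the normalized mean is at least $\alpha\tau$ there), and covers each $r\notin R$ \emph{deterministically} by adding $\frac1\beta(c^r)^T\widehat x$ to the dual coordinate $i(r)\in\argmax_i a_{ir}$; the total cost of these corrections is bounded \emph{in expectation} by $k\alpha\tau c_{\max}/\beta$ (then made high-probability via Markov, Remark~\ref{r0}), and balancing $\rho$ against this term via $\tau:=\sqrt{6\beta\ln(2k)c_{\min}/((1+\epsilon)\alpha\gamma c_{\max}k)}$ yields the square-root bound. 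You would need to import this expectation-based thresholding to close the gap. A second, smaller omission: your lower bound $\OPT_R\gtrsim c_{\min}/\gamma$ only holds for solutions with $\bC^Tx\ne0$; the paper handles the degenerate case by separately solving the relaxation restricted to $\{x\in\cQ:\ x_j=0\ \forall j\ \text{with some}\ c^r_j>0\}$ and returning the better of the two solutions, a step your ``non-degenerate'' caveat gestures at but does not supply.
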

\begin{proof}
	Assume the availability of an $\alpha$-\Ancrr\ integrality gap verifier for the nominal problem \raf{cop-1}.  We assume w.l.o.g. that $b=\b1_m$.
	Note that the robust DO problem~\raf{rcop-1} in this case takes the form: 
	\begin{alignat}{3}
	\label{r-5}
	\quad& \displaystyle \OPT_{R} = \min_{x\in \cS}\left\{(c^0)^Tx+ \max_{\delta\in\RR_+^k:~ A\delta\le\b1_m}x^T\bC\delta\right\},
	\end{alignat}
	where $\bC\in\RR_+^{n\times k}$ is the matrix whose columns are $c^1,\ldots,c^k$.  
	Let us consider the inner maximization problem in~\raf{r-5} and its dual (for a given $x\in\{0,1\}^n$):
	
	{\centering \hspace*{-18pt}
		\begin{minipage}[t]{.47\textwidth}
			\begin{alignat}{3}
			\label{ir-5}
			\quad& \displaystyle z^*(x) = \max \quad x^T\bC\delta\\
			\text{s.t.}\quad & \displaystyle A\delta\le \b1_m, \label{Adb}\\ 
			&\delta\in\RR_+^k\nonumber
			\end{alignat}
		\end{minipage}
		\,\,\, \rule[-14ex]{1pt}{14ex}
		\begin{minipage}[t]{0.47\textwidth}
			\begin{alignat}{3}
			\label{ir-5-d}
			\quad& \displaystyle z^*(x) = \min \quad \b1_m^T \theta\\
			\text{s.t.}\quad & \displaystyle A^T\theta\ge \bC^Tx, \label{theta-d}\\
			\quad& \theta\in\RR_+^m.\label{nong-d}
			\end{alignat}
	\end{minipage}}

\noindent Note that if $\bC^Tx=0$ for $x\in\{0,1\}^n$, then $z^*(x) =0$ and $x_j=0$ for all $j\in J:=\{j\in[n]~|~\exists r\in[k] :~c^r_j>0\}$. Thus, by considering the relaxation~\raf{lprlx} with $c=c^0$ and $\cQ$ replaced by $\cQ':= \{x\in\cQ:~x_j=0~~\forall j\in J\}$, and calling the integrality gap verifier on the obtained optimal fractional solution $x^*$, we can find an integral solution $\widehat x\in\cS$ such that $\EE[(c^0)^T\widehat x]\le \alpha (c^0)^Tx^*$ (or discover that none exist if the relaxation is infeasible). In view of Remark~\ref{r0}, this expectation guarantee can be turned into a high-probability guarantee without sacrificing much the approximation ratio, that is, we can get a solution $\widehat x^0\in\cS$ such that, with probability $1-o(1)$, we have $(c^0)^T\widehat x^0\le (1+\epsilon)(c^0)^Tx^*$, for any given $\epsilon>0$. We will assume therefore in the following that $\bC^Tx\ne 0$ for all $x\in\cS$, as we will return the minimum of the solution obtained under this assumption and $(c^0)^T\widehat x^0$. 
\begin{claim}\label{cl9}
	For any $x\in\{0,1\}^n$ such that $\bC^Tx\neq0$, we have $z^*(x)\ge\frac{c_{\min}}{\gamma}$.
\end{claim}	
\begin{proof}
		Let $(\theta^*,y^*)$ be an optimal solution to the dual problem~\raf{ir-5-d}.  Since $\bC^Tx\ne0$, there exist $r,j$ such that $c^r_j>0$ and $x_j=1$. Then,
		\begin{align*}
		c_{\min}\le c^r_j\le (c^r)^Tx\le (a^r)^T\theta ^*\le \gamma\b1_m^T\theta^*.
		\end{align*}
		The claim follows.
\end{proof}
	
Let $z^*_R$ be the value of the relaxation for \raf{r-5}, that is,
\begin{alignat}{3}
\label{rlx-5}
\quad& \displaystyle z_{R}^* = \min_{x\in \cQ}\left\{(c^0)^Tx+ \max_{\delta\in\RR_+^k:~ A\delta\le\b1_m}x^T\bC\delta\right\}.
\end{alignat}
As in \raf{r-2}, we may rewrite \raf{rlx-5} as
\begin{alignat}{3}
\label{rlx--5}
\quad& \displaystyle z_{R}^* = \min \quad (c^0)^Tx+\b1_m^T \theta\\
\text{s.t.}\quad & \displaystyle A^T\theta\ge \bC^Tx, \label{cover-d}\\
\quad& \theta\in\RR_+^m, x\in \cQ.\nonumber
\end{alignat}
Let $(x^*,\theta^*)$ be an optimal solution for the LP~\raf{rlx--5}. 
We call the $\alpha$-\Ancrr\ integrality gap verifier on $x^*$ to get an $\alpha$-\Ancrr\ $\widehat x\in\cS$. Let $\tau\in(0,1)$ be a number to be chosen later, and define $R:=\{r\in[k]:~(c^r)^Tx^*\ge \tau c_{\max}^r\}$.  
\begin{claim}\label{cl8}
	For $\rho\ge 1$, $\Pr\left[\forall r\in R:~(c^r)^T\widehat x\le(1+\rho)\alpha (a^r)^T\theta^*\right]\ge 1-ke^{-\rho\alpha\tau/3}$.
\end{claim}	
\begin{proof}
	We will use the following extension of Chernoff bound:
	\begin{fact}[\cite{PS97}]\label{f1}
		Let $w\in [0,1]^n$ be a given vector of numbers and $\widehat x\in\{0,1\}^n$ be a vector of random variables. 
		Suppose that~\raf{nc2} holds for all $S\subseteq[n]$, and $\EE[w^T\widehat x] \le \mu$. Then for any $\rho\ge1$, we have $\Pr[w^T\widehat x\ge(1+\rho)\mu]\le  e^{-\mu\rho/3}$.
	\end{fact}
Fix $r\in R$. Applying Fact~\ref{f1} with $w:=\frac{c^r}{c_{\max}^r}\in[0,1]^n$ and noting by property~(i) of an $\alpha$-\Ancrr\ and the feasibility of $x^*$ for~\raf{theta-d} that $\EE[(c^r)^T\widehat x]\le\alpha(c^r)^Tx^*\le\alpha (a^r)^T\theta^*$, we  obtain, for $\rho\ge 1$,
\begin{align*}
\Pr[(c^r)^T\widehat x\ge(1+\rho)\alpha (a^r)^T\theta^*]&\le\Pr\big[(c^r)^T\widehat x\ge(1+\rho)\alpha(c^r)^Tx^*\big]\le e^{-\frac{\rho\alpha(c^r)^Tx^*}{3c_{\max}^r}}\le e^{-\rho\alpha\tau/3}.
\end{align*}
The claim follows by applying a union bound over all $r\in R$.
\end{proof}
For $r\not\in R$, define  $i(r)$ to be the smallest $i\in[m]$ such that $i\in\argmax_{i'} a_{i'r}$.
Let us next choose $\rho:=\frac{6\ln(2k)}{\tau}>1$ and define the dual solution $\widehat \theta\in\RR_+^m$ as follows: 
\begin{align*}
\widehat \theta_i:=
(1+\rho)\alpha\theta_i^*+\frac1\beta\sum_{r\not\in R:~i=i(r)}(c^r)^T\widehat x,\qquad\text{ for }i=1,\ldots,m.
\end{align*}
Let us fix an arbitrary constant $\epsilon\in(0,1)$.
  \begin{claim}\label{cl10}
	With probability $1-o(1)$, $(\widehat x,\widehat\theta)$ is feasible for \raf{rlx--5} and $(c^0)^T\widehat x+\b1_m^T\widehat\theta\le\left((1+\rho)+\frac{(1+\epsilon)\gamma \tau c_{max} k}{\beta c_{\min}}\right)\alpha \OPT_R.$
\end{claim}	
\begin{proof}
	First, we show feasibility of $(\widehat x,\widehat\theta)$.  By Claim~\ref{cl8}, with probability $1-ke^{-\rho\alpha\tau/3}\ge1-\frac1{4k}$, for all $r\in R$, we have $(1+\rho)\alpha (a^r)^T\theta^*\ge (c^r)^T\widehat x$, and hence, $(a^r)^T\widehat\theta\ge(c^r)^T\widehat x$. On the other hand, if $r\not\in R$, then $(a^r)^T\widehat\theta\ge\frac{a_{ij}}\beta\sum_{r'\not\in R:~i=i(r')}(c^{r'})^T\widehat x\ge(c^{r})^T\widehat x$, where $i:=i(r)$.
	
	By property~(i) of an $\alpha$-\Ancrr, we have $\EE[(c^0)^T\widehat x]\le\alpha (c^0)^Tx^*$, and by definition of $R$, we have $\EE[\sum_i\sum_{r\not\in R:~i=i(r)}(c^r)^T\widehat x]\le \alpha\sum_i\sum_{r\not\in R:~i=i(r)}(c^r)^Tx^*\le k\alpha\tau c_{\max}$. Thus, in view of Remark~\ref{r0}, with prob. $1-o(1)$, for any $\epsilon\in(0,1)$, we have  
	$(c^0)^T\widehat x+\frac1\beta\sum_i\sum_{r\not\in R:~i=i(r)}(c^r)^T\widehat x \le(1+\epsilon)\big(\alpha (c^0)^Tx^*+\frac{k\alpha\tau c_{\max}}{\beta}\big)$.
	It follows that, with prob. $1-o(1)$, we have
	\begin{align*}
    (c^0)^T\widehat x+\b1_m^T\widehat\theta&=(c^0)^T\widehat x+(1+\rho)\alpha\b1_m^T\theta^*+\frac1\beta\sum_i\sum_{r\not\in R:~i=i(r)}(c^r)^T\widehat x\\
    &\le(1+\epsilon)\alpha (c^0)^Tx^*+(1+\rho)\alpha\b1_m^T\theta^*+(1+\epsilon)\frac{k\alpha\tau c_{\max}}\beta\\
	&\le (1+\rho)\alpha z^*_R+\frac{(1+\epsilon)\alpha\gamma \tau c_{\max} k}{\beta c_{\min}}\OPT_R\tag{by Claim~\ref{cl9}}\\&\le\left((1+\rho)+\frac{(1+\epsilon)\gamma \tau c_{\max} k}{\beta c_{\min}}\right)\alpha \OPT_R.
	\end{align*}
\end{proof}
It follows from Claim~\ref{cl10} that, With probability $1-o(1)$, for any $c=c^0+\bC\delta\in\cC$, given by \raf{Affine} with  $\cD=\{\delta\in\RR^{k}_+:A\delta\le b\}$, we have
\begin{align}\label{ee2--}
c^T\widehat x =(c^0)^T\widehat x+(\bC\delta)^T\widehat x\le (c^0)^T\widehat x+z^*(\widehat x)\le(c^0)^T\widehat x+\b1^T_m\widehat\theta\le \left((1+\rho)+\frac{(1+\epsilon)\gamma \tau c_{\max} k}{\beta c_{\min}}\right)\alpha \OPT_R.
\end{align} 
The theorem follows by choosing $\tau:=\sqrt{\frac{6\beta\ln(2k)c_{\min}}{(1+\epsilon)\alpha\gamma c_{max} k}}$.
\end{proof}
\subsection{A Deterministic Robust Approximation Algorithm for Ellipsoidal Uncertainty}

Consider the DO problem \raf{cop-1} and its LP-relaxation~\raf{lprlx}, when the uncertainty set $\cC$ is given by the ellipsoid: 
\begin{align}\label{U3}
\cC=\left\{c:=c^0+\sum_{r=1}^k\delta_rc^r\left|~\delta\in\RR^k,~\|D^{-1}\delta\|_2\le 1\right.\right\},
\end{align}
where $c_0,c^1,\ldots,c^k\in\RR_+^n$ are given non-negative vectors, and $D\in\RR^{k\times k}_+$ is a given positive definite matrix. 
\begin{theorem}\label{PU3-thm}
	Consider the DO problem \raf{cop-1} and its relaxation~\raf{rlp-rlx}, when the uncertainty set $\cC$ is given by \raf{U3}, such that $D\bC\ge0$. Then, there is an $O\big(\alpha\sqrt{k}\big)$-deterministically robust approximation algorithm for the robust minimization problem~\raf{rcop-1}, which can be obtained by a polynomial number of calls to an $\alpha$-integrality gap verifier for the nominal problem \raf{cop-1}. 	
\end{theorem}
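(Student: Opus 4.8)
The plan is to reduce the ellipsoidal-robust problem to a single instance of the \emph{nominal} problem with a suitably linearized cost vector, exploiting the assumption $D\bC\ge0$. First I would eliminate the inner maximization in closed form. Since the perturbation set in~\raf{U3} is the full (symmetric) ellipsoid $\{\delta:\|D^{-1}\delta\|_2\le1\}$, substituting $u=D^{-1}\delta$ and using that $D$ is symmetric gives, for every $x$,
\begin{align*}
\max_{c\in\cC}c^Tx=(c^0)^Tx+\max_{\|D^{-1}\delta\|_2\le1}x^T\bC\delta=(c^0)^Tx+\|D\bC^Tx\|_2=:F(x),
\end{align*}
so that $\OPT_R=\min_{x\in\cS}F(x)$ and the relaxation value is $z_R^*=\min_{x\in\cQ}F(x)\le\OPT_R$.

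The key step is to sandwich the nonlinear term $\|D\bC^Tx\|_2$ between two copies of one linear functional. Writing $M:=D\bC^T\in\RR^{k\times n}$, the hypothesis $D\bC\ge0$ means $M\ge0$ entrywise; hence for any $x\ge0$ the vector $Mx\in\RR^k$ is nonnegative, and the elementary norm inequality $\frac1{\sqrt k}\|v\|_1\le\|v\|_2\le\|v\|_1$ applied to $v=Mx\ge0$ yields $\frac1{\sqrt k}\b1_k^TMx\le\|Mx\|_2\le\b1_k^TMx$. Defining the surrogate cost $\tilde c:=c^0+M^T\b1_k=c^0+\bC D\b1_k$, which is nonnegative \emph{precisely because} $M\ge0$ (so the integrality gap verifier will be applicable to it), this gives the two-sided bound
\begin{align*}
\tfrac1{\sqrt k}\,\tilde c^Tx\le F(x)\le\tilde c^Tx\qquad\text{for all }x\in\RR^n_+.
\end{align*}

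With the linearization in hand I would run the nominal machinery on $\tilde c$: solve the relaxation~\raf{lprlx} with objective $\tilde c$ to obtain a fractional optimum $x^*\in\cQ$, then invoke the $\alpha$-integrality gap verifier on $(\tilde c,x^*)$ to round it to $\widehat x\in\cS$ with $\tilde c^T\widehat x\le\alpha\,\tilde c^Tx^*$ (equivalently one may use the oblivious verifier of Corollary~\ref{c1}). It remains to chain the inequalities. Letting $x_R^*$ be a minimizer of $F$ over $\cQ$, and using the upper bound of the sandwich on $\widehat x$, optimality of $x^*$ for $\tilde c$, the lower bound of the sandwich on $x_R^*$, and $z_R^*\le\OPT_R$, I obtain
\begin{align*}
\max_{c\in\cC}c^T\widehat x=F(\widehat x)\le\tilde c^T\widehat x\le\alpha\,\tilde c^Tx^*\le\alpha\,\tilde c^Tx_R^*\le\alpha\sqrt k\,F(x_R^*)=\alpha\sqrt k\,z_R^*\le\alpha\sqrt k\,\OPT_R,
\end{align*}
which is exactly the $O(\alpha\sqrt k)$-deterministic robustness guarantee, since the bound holds simultaneously for every $c\in\cC$.

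The main obstacle --- and the only place the structural hypothesis is used --- is the linearization step: without $D\bC\ge0$ the term $\|D\bC^Tx\|_2$ cannot be replaced by the fixed linear functional $\b1_k^TD\bC^Tx$ (the $\ell_1$-norm would involve absolute values), and the surrogate $\tilde c$ could fail to be nonnegative, so the $\alpha$-integrality gap verifier could not be invoked. I would also take mild care that the ellipsoid's closed-form value uses the full symmetric ball (so $\delta$ ranges over $\RR^k$, not $\RR^k_+$) and that the usual $(1+\epsilon)$ slack from solving the relaxation only to near-optimality is absorbed into the constant; both are routine.
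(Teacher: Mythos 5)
Your proof is correct, and it takes a genuinely different --- and leaner --- route than the paper's. The paper first solves the robust convex relaxation \raf{rlx-3}, i.e.\ it minimizes $(c^0)^Tx+\|D\bC^Tx\|_2$ over $\cQ$ (a second-order-cone program), then invokes the Carr--Vempala decomposition of Theorem~\ref{CV-LP} (itself consuming polynomially many verifier calls) to write $\alpha x^*\geq\sum_{x\in\cX}\mu_x x$, and finally selects $\widehat x\in\argmin_{x\in\cX}\big((c^0)^Tx+\|D\bC^Tx\|_2\big)$; the hypothesis $D\bC\ge0$ enters there exactly as in your argument, making $\|D\bC^Tx\|_1=\b1_k^TD\bC^Tx$ linear so the convex combination can be pushed through the norm, with the $\sqrt k$ paid via $\|\cdot\|_2\le\|\cdot\|_1\le\sqrt{k}\,\|\cdot\|_2$ applied to vectors in $\RR^k$ (the paper's intermediate $\sqrt n$ is evidently a typo for $\sqrt k$). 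You instead pay the same $\sqrt k$ up front, sandwiching the robust objective between $\frac1{\sqrt k}\tilde c^Tx$ and $\tilde c^Tx$ for the fixed nonnegative surrogate $\tilde c=c^0+\bC D\b1_k$, after which a single nominal LP solve and a single verifier call on $(\tilde c,x^*)$ finish the job; your chaining of the four inequalities is valid (note $\cS\subseteq\ZZ_+^n$ since $\cQ\subseteq\RR_+^n$, so the sandwich does apply to $\widehat x$). What your version buys: no Carr--Vempala machinery and no convex relaxation --- only one LP --- and the verifier is needed only for the single objective $\tilde c$ rather than for the whole class $\RR_+^n$ that Theorem~\ref{CV-LP} requires; moreover your bound is at least as tight at the corresponding stage, since your $x^*$ minimizes $\tilde c^Tx$ over $\cQ$ whereas the paper in effect evaluates the same linear functional at the minimizer of the robust objective. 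What the paper's version buys is uniformity of technique: the same decompose-and-select template recurs in its other results (e.g.\ Theorems~\ref{PU2-thm} and~\ref{PU4-thm}), and its rounding step does not depend on the ellipsoid's data. Your two cautionary remarks --- that \raf{U3} uses the full symmetric ball so the inner maximum is exactly $\|D\bC^Tx\|_2$, and that the $(1+\epsilon)$ slack from near-optimal relaxation solving is absorbed into the constant --- are both apt and match the paper's setting.
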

\begin{proof}
	Let $x^*$ be an optimal solution for the convex relaxation~\raf{rlp-rlx}:
	\begin{alignat}{3}
	\label{rlx-3}
	\quad& \displaystyle z_{R}^* = \min_{x\in \cQ}\left\{(c^0)^Tx+ \max_{w\in\RR^n:~\| w\|_2\le 1}x^T\bC Dw\right\} = \min_{x\in \cQ}\left\{(c^0)^Tx+\|D\bC^Tx\|_2\right\}.
	\end{alignat}  
	Call the algorithm in Theorem~\ref{CV-LP} to get a dominated convex combination as in \raf{cvx-comb-LP}, with a polynomially sized set $\cX:=\{x\in\cS:~\mu_x>0\}$. Choose $\widehat x\in\argmin_{x\in\cX}\big(c_0^Tx+\|DC^Tx\|_2\big)$. Then
	\begin{align*}
	c_0^T\widehat x+\|D\bC^T\widehat x\| &\le \sum_{x\in\cX}\mu_x(c^0)^Tx+\sum_{x\in\cX}\mu_x\|D\bC^Tx\|_2 \le \sum_{x\in\cX}\mu_x(c^0)^Tx+\sum_{x\in\cX}\mu_x\|D\bC^Tx\|_1\\
	&=\sum_{x\in\cX}\mu_x(c^0)^Tx+\sum_{x\in\cX}\mu_x\sum_j(D\bC^Tx)_j= \sum_{x\in\cX}\mu_x(c^0)^Tx+\sum_j\sum_{x\in\cX}\mu_x(D\bC^Tx)_j\tag{$\because D\bC\ge0$}\\
	&\le\alpha(c^0)^Tx^*+\alpha\sum_j(D\bC^Tx^*)_j=\alpha(c^0)^Tx^*+\alpha\|D\bC^Tx^*\|_1\\&\le\alpha(c^0)^Tx^*+\alpha\sqrt n\|D\bC^Tx^*\|_2\le\alpha\sqrt k \cdot z^*_R\le \alpha\sqrt k \cdot \OPT_{R}.
	\end{align*}
	 It follows that for any $c=c^0+\bC\delta\in\cC$, given by \raf{U3}, 
	\begin{align*}
	c^T\widehat x =(c^0)^T\widehat x+(\bC\delta)^T\widehat x=(c^0)^T\widehat x+(D^{-1}\delta)^T(D\bC^T\widehat x)\le(c^0)^T\widehat x+\|D^{-1}\delta\|_2\cdot\|D\bC^T\widehat x\|_2\le(c^0)^T\widehat x+\|D\bC^T\widehat x\|_2 \le\alpha\sqrt k \cdot  \OPT_{R}.
	\end{align*} 
\end{proof}

Next, let us consider the case when the set $\cS\subseteq\{0,1\}^n$ and the uncertainty set $\cC$ is given by 
\begin{align}\label{U4}
\cC=\left\{c:=c^0+u~\left|~u\in\RR_+^n,~\|D^{-1}u\|_2\le 1\right.\right\},
\end{align}
for a given  a given positive definite matrix $D\in\RR^{n\times n}$. 
\begin{theorem}\label{PU4-thm}
	Consider the DO problem \raf{cop-1} and its relaxation~\raf{rlp-rlx}, when the set $\cS\subseteq\{0,1\}^n$ is a covering set and the uncertainty set $\cC$ is given by \raf{U4}, such that $D^{-1}>0$. Then, there is an $O\big(\alpha+\sqrt{\frac{\alpha\lambda_{\max}(D) n}{\lambda_{\min}(D)}}\big)$-deterministically robust approximation algorithm for the robust version~\raf{rcop-1}, which can be obtained by a polynomial number of calls to an $\alpha$-integrality gap verifier for the nominal problem \raf{cop-1}. 	
\end{theorem}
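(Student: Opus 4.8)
The plan is to mirror the dual-fitting proof of Theorem~\ref{PU2-thm}, replacing the polyhedral inner maximization by the ellipsoidal support function. Writing the robust objective as $\OPT_R=\min_{x\in\cS}\big[(c^0)^Tx+z^*(x)\big]$ with $z^*(x):=\max\{x^Tu:\ u\ge0,\ \|D^{-1}u\|_2\le1\}$, the first thing I would record is that, because the inner maximization is over $u\ge0$, the function $z^*$ is \emph{sublinear} ($z^*(a+b)\le z^*(a)+z^*(b)$ and $z^*(\lambda a)=\lambda z^*(a)$ for $\lambda\ge0$) and \emph{monotone} ($0\le a\le b$ implies $z^*(a)\le z^*(b)$). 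As in~\raf{ee2-}, it then suffices to produce a binary $\widehat x\in\cS$ (with $0\notin\cS$ assumed) such that $(c^0)^T\widehat x+z^*(\widehat x)\le O\big(\alpha+\sqrt{\alpha\lambda_{\max}(D)n/\lambda_{\min}(D)}\big)\cdot\OPT_R$, since for every $c=c^0+u\in\cC$ given by~\raf{U4} the vector $u$ is feasible for the inner maximization and hence $c^T\widehat x=(c^0)^T\widehat x+u^T\widehat x\le(c^0)^T\widehat x+z^*(\widehat x)$.

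The heart of the argument is to view the ellipsoid as a polytope with infinitely many linear inequalities, $\|D^{-1}u\|_2\le1\iff (D^{-1}w)^Tu\le1$ for all $\|w\|_2\le1$, so that $z^*$ acquires a \emph{linear} dual whose certificates I can fit by hand. For each coordinate $j$ I would use the covering direction $w^{(j)}:=D^{-1}\be_j/\|D^{-1}\be_j\|_2$ (with $\be_j$ the $j$th unit vector), which maximizes $(D^{-1}w)_j$ over the unit ball; the crucial point is that $D^{-1}>0$ forces $D^{-2}>0$, whence $D^{-1}w^{(j)}=D^{-2}\be_j/\|D^{-1}\be_j\|_2\ge0$, so ``paying'' for coordinate $j$ never hurts the covering constraint of any other coordinate. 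Since $(D^{-1}w^{(j)})_j=\|D^{-1}\be_j\|_2\ge1/\lambda_{\max}(D)$, covering one unit of coordinate $j$ costs at most $\lambda_{\max}(D)$ units of dual mass, and a Cauchy--Schwarz computation gives the weak-duality bound $z^*(v)\le\sum_j m_j$ whenever $\sum_j m_j\,D^{-1}w^{(j)}\ge v$ with $m\ge0$. In the other direction, testing $z^*(x)$ against $u=\lambda_{\min}(D)\be_j$ (feasible since $\|D^{-1}\be_j\|_2\le1/\lambda_{\min}(D)$) shows $z^*(x)\ge\lambda_{\min}(D)$ for every binary $x\neq0$. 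These two estimates play exactly the roles that $1/\beta$ and $1/\gamma$ play in Theorem~\ref{PU2-thm}.

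With these tools the rounding is routine. I would solve the convex relaxation $z_R^*=\min_{x\in\cQ}[(c^0)^Tx+z^*(x)]$ (an SOCP, since $z^*(x)\le t\iff \exists\,y\ge0:\|D(x+y)\|_2\le t$) to get $x^*$, apply Theorem~\ref{CV-LP} to obtain $\cX\subseteq\cS$ with $\sum_{x\in\cX}\mu_x x\le\alpha x^*$, fix a threshold $\tau\in(0,1)$, set $J:=\{j:x_j^*\ge\tau\}$, and---using the averaging argument of Claim~\ref{cl5} applied to the single linear functional $\sum_{j\notin J}c^0_jx_j+\lambda_{\max}(D)\sum_{j\notin J}x_j$---select $x\in\cX$ whose $\bar J$-part is cheap. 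Rounding up $J$ to $1$ (allowed since $\cS$ is covering) gives $\widehat x$, and writing $\widehat x=\chi^J+x|_{\bar J}$ for the indicator $\chi^J$ of $J$, I would bound, by sublinearity, monotonicity ($\chi^J\le\tfrac1\tau x^*$), and the covering certificate applied to $x|_{\bar J}$,
\begin{align*}
z^*(\widehat x)\le z^*(\chi^J)+z^*(x|_{\bar J})\le\tfrac1\tau z^*(x^*)+\lambda_{\max}(D)\sum_{j\notin J}x_j.
\end{align*}
Combining with $(c^0)^T\widehat x$ exactly as in Claim~\ref{cl6}, using $\sum_{j\notin J}x_j^*<\tau n$ and the lower bound $\OPT_R\ge\lambda_{\min}(D)$ to convert the additive error into a multiplicative one, yields $(c^0)^T\widehat x+z^*(\widehat x)\le\big(\alpha+\tfrac1\tau+\tfrac{\alpha\lambda_{\max}(D)n}{\lambda_{\min}(D)}\tau\big)\OPT_R$, and the choice $\tau=\sqrt{\lambda_{\min}(D)/(\alpha\lambda_{\max}(D)n)}$ gives the stated guarantee.

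The step I expect to be the main obstacle is precisely the construction of a \emph{linear} dual certificate for the norm constraint. The conic dual of the inner problem is $\min_{y\ge0}\|D(x+y)\|_2$, whose objective is a norm, and a norm interacts badly both with the coordinatewise splitting $\widehat x=\chi^J+x|_{\bar J}$ and with linearity of expectation in the averaging step. Passing to the ``infinitely many linear inequalities'' description and isolating the covering directions $w^{(j)}$ is what restores a per-coordinate, linear accounting; making this rigorous---and verifying that $D^{-1}>0$ is exactly the hypothesis keeping the certificate nonnegative, so that relating $\|D^{-1}\be_j\|_2$ to $\lambda_{\min}(D),\lambda_{\max}(D)$ produces the condition-number factor $\lambda_{\max}(D)/\lambda_{\min}(D)$---is the only delicate part.
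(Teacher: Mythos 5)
Your proposal is correct, and its skeleton coincides with the paper's: solve the relaxation, apply the Carr--Vempala decomposition (Theorem~\ref{CV-LP}), threshold at $\tau$ to get $J$, select $x\in\cX$ by exactly the averaging statement of Claim~\ref{cl5-} with the functional $\sum_{j\notin J}c^0_jx_j+\lambda_{\max}(D)\sum_{j\notin J}x_j$, round up $J$ using the covering property, use $\OPT_R\ge\lambda_{\min}(D)$ to absorb the additive error $\alpha\lambda_{\max}(D)\tau n$, and set $\tau=\sqrt{\lambda_{\min}(D)/(\alpha\lambda_{\max}(D)n)}$. Where you genuinely depart from the paper is in how the bound on $z^*(\widehat x)$ is certified. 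The paper writes the inner problem as a \emph{semi-infinite} LP over $E_+(0,D^{-1})$ with an infinite-dimensional dual $\theta:E_+(0,D^{-1})\to\RR_+$ (invoking \cite{EMN19} for near-optimal duals and strong duality in the limit), and then dual-fits: it scales the relaxation's dual $\theta^*$ by $1/\tau$ and adds Dirac point masses of weight $\lambda_{\max}(D)$ at the points of $E_+(0,D^{-1})$ maximizing $v_j$ for $j\notin J$ with $x_j=1$. You bypass the infinite-dimensional dual entirely: sublinearity and monotonicity of the support function give $z^*(\widehat x)\le z^*(\chi^J)+z^*(x|_{\bar J})$ with $z^*(\chi^J)\le\frac{1}{\tau}z^*(x^*)$ (since $\chi^J\le x^*/\tau$), so the relaxation's dual solution is never needed, and only the $\bar J$-part requires certificates --- your directions $w^{(j)}=D^{-1}\be_j/\|D^{-1}\be_j\|_2$ with cost $\lambda_{\max}(D)$ per unit are precisely the paper's point-mass locations, with the finite weak-duality computation replacing integration against $\widehat\theta$. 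Similarly, your primal witness $u=\lambda_{\min}(D)\be_j$ proves the lower bound of Claim~\ref{cl7-} directly, where the paper argues through the dual, and your SOCP reformulation $z^*(x)=\min_{y\ge0}\|D(x+y)\|_2$ replaces the multiplicative-weights solver. One small mismatch in hypotheses is harmless: your certificate needs the columns of $D^{-2}$ to be nonnegative (so that $D^{-1}w^{(j)}\ge0$), while the paper only needs the columns of $D^{-1}$ to be nonnegative (to place its masses inside $E_+(0,D^{-1})$); both follow from the assumption $D^{-1}>0$. On balance your route is more elementary and self-contained (no Dirac deltas, no semi-infinite duality, and --- in the spirit of Remark~\ref{r2} --- not even existence of a dual to the relaxation is invoked), whereas the paper's dual-fitting template is the one that extends verbatim to mixed polyhedral-plus-ellipsoidal uncertainty as indicated in Remark~\ref{r3}.
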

\begin{proof}
	Note that the robust DO problem~\raf{rcop-1} in this case takes the form: 
	\begin{alignat}{3}
	\label{r-4}
	\quad& \displaystyle \OPT_{R} = \min_{x\in \cS}\left\{(c^0)^Tx+ \max_{u\in\RR_+^n:~\|D^{-1}u\|_2\le 1}x^Tu\right\}.
	\end{alignat}
	Let us consider the inner maximization problem in~\raf{r-1}, which can be written as the following {\it semi-infinite} LP (as $D^{-1}>0$), and its dual (for a given $x\in\RR_+^n$):
	
	{\centering \hspace*{-18pt}
		\begin{minipage}[t]{.47\textwidth}
			\begin{alignat}{3}
			\label{ir-4}
			\quad& \displaystyle z^*(x) = \max \quad x^T u\\
			\quad& v^Tu\le 1,~\forall v\in E_+(0,D^{-1}):=\{v\in\RR_+^n:~v^TD^{2}v\le 1\},\nonumber\\
			\quad&  u\ge 0,~u\in\RR^n\nonumber
			\end{alignat}
		\end{minipage}
		\,\,\, \rule[-14ex]{1pt}{14ex}
		\begin{minipage}[t]{0.47\textwidth}
			\begin{alignat}{3}
			\label{ir-4-d}
			\quad& \displaystyle z^*(x) = \min \quad \int_{v\in E_+(0,D^{-1})}\theta(v)dv\\
			\text{s.t.}\quad & \displaystyle \int_{v\in E_+(0,D^{-1})}\theta(v)vdv\ge x, \label{theta}\\
			\quad& \theta:E_+(0,D^{-1})\to\RR_+.\label{nong3}
			\end{alignat}
	\end{minipage}}
	
	\noindent It was shown in \cite{EMN19} that, for any given $x\in\RR_+^n$,  a near-optimal solution for \raf{ir-4} can be obtained in polynomial time, using {\it multiplicative weight updates}, which also produces a near-optimal solution to \raf{ir-4-d}. More precisely, for any $\epsilon>0$,  we can find in $\poly(n,m,\log\frac{\lambda_{\max}(D)}{\lambda_{\min(D)}},\frac1\epsilon)$ time, where $\lambda_{\min}(D)$ and $\lambda_{\max}(D)$  are the minimum and maximum eigenvalues of $D$ respectively, and  vectors $\theta^*\in\RR_+^m$ and a (implicitly described) function $\theta^*:E(0,D^{-1})\to\RR_+$, satisfying \raf{theta} and
	\begin{align}\label{near-opt2}
	\int_{v\in E_+(0,D^{-1})}\theta^*(v)dv\le(1+\epsilon)z^*(x).
	\end{align}
	Note that \raf{near-opt2} implies that {\it strong} duality holds, as we can set $\epsilon\to0$. 
	Thus, we may write~\raf{r-4} as follows:
	\begin{alignat}{3}
	\label{r-4-d}
	\quad& \displaystyle \OPT_{R} = \min \quad (c^0)^Tx+\int_{v\in E_+(0,D^{-1})}\theta(v)dv\\
	\text{s.t.}\quad & \displaystyle \int_{v\in E_+(0,D^{-1})}v\theta(v)dv\ge x,\nonumber\\
	\quad&\theta:E_+(0,D^{-1})\to\RR_+,~x\in\cS.\nonumber
	\end{alignat}
	We will make use of the lower bound in the following claim. 
	\begin{claim}\label{cl7-}
		For any $x\in\{0,1\}^n\setminus\{0\}$, $\lambda_{\min}(D)\le z^*(x)\le\lambda_{\max}(D)\sqrt{n}$.
	\end{claim}	
	\begin{proof}
		First we show the upper bound:
		\begin{align}\label{eee2}
		\displaystyle z^*(x)=\max_{u\ge 0:~v^Tu\le 1,~\forall v\in E_+(0,D^{-1})}x^Tu=\max_{u\ge 0:~\|D^{-1}u\|_2\le 1}x^Tu\le\max_{u\ge 0:~\|D^{-1}u\|_2\le 1}\b1_n^Tu\le \|D\b1_n\|_2\le\lambda_{\max}(D)\sqrt{n}.
		\end{align}
		To see the lower bound, let $\theta^*$ be an optimal solution to the dual problem~\raf{ir-4-d}. Then for any $j$ such that $x_j=1$, we have
		\begin{align*}
		1\le \int_{v\in E_+(0,D^{-1})}v_j\theta^*(v)dv\le\lambda_{\max}(D^{-1})\int_{v\in E_+(0,D^{-1})}\theta^*(v)dv=\lambda_{\max}(D^{-1})z^*(x).
		\end{align*}
		The claim follows.
	\end{proof}
	Let $x^*\in\cQ$ be an optimal solution for the convex relaxation~\raf{rlp-rlx}: 
	\begin{alignat}{3}
	\label{rlx-4}
	\quad& \displaystyle z_{R}^* = \min_{x\in \cQ}\left\{(c^0)^Tx+ \max_{w\in\RR^n_+:~\|D^{-1}u\|_2\le 1}u^Tx\right\},
	\end{alignat}
	and $\theta^*$ be a corresponding dual solution. 
	(As in Remark~\ref{r2}, we do not actually need to compute the dual solution, but we use its existence to bound the rounded integral solution.) 
	We first call the algorithm in Theorem~\ref{CV-LP} to get a dominated convex combination as in \raf{cvx-comb-LP}, with a polynomially sized set $\cX:=\{x\in\cS:~\mu_x>0\}$. Let $\tau\in(0,1)$ be a number to be chosen later, and define $J:=\{j\in[n]:~x_j^*\ge \tau\}$.  
	Similar to Claim~\ref{cl5}, we can prove the following.
	\begin{claim}\label{cl5-}
		There exists $x\in\cX$ such that $\sum_{j\not\in J}c^0_jx_j+\lambda_{\max(D)}\sum_{j\not \in J}x_j\le\alpha \big(\sum_{j\not\in J}c^0_jx_j^*+\lambda_{\max(D)}\sum_{j\not\in J}x^*_j\big)$.
	\end{claim}	
	Let $x\in\cX$ be a vector chosen to satisfy the condition in Claim~\ref{cl5-}. We define the rounded vector $\widehat x$ as follows:
	\begin{align*}
	\widehat x_j:=\left\{\begin{array}{ll}
	x_j, &\text{ if } j\not\in J,\\ 
	1, &\text{ if } j\in J. 
	\end{array}
	\right.
	\end{align*}
	Note that $\widehat x\in\cS$ since $\cS$ is covering. 
	We next define $\widehat\theta:E_+(0,D^{-1})\to\RR_+$ as follows: 
	\begin{align*}
	\widehat \theta(v)&:=\frac1\tau\theta^*(v)+\lambda_{\max(D)}\sum_{v'\in T}\delta_n(v-v'), 
	\end{align*}
	where $T:=\{v\in E_+(0,D^{-1}):~\exists j\not\in J \text{ s.t. }x_j=1\text{ and }v_j=\max_{v'\in E_+(0,D^{-1})} v_j'\}$ and $\delta_n:\RR^n\to\RR$ is the $n$-dimensional {\it Dirac delta function} satisfying $\delta_n(v)=0$ for all $v\neq 0$ and $\int_{v\in\RR^n}\delta_n(v)dv=1$. 
	\begin{claim}\label{cl6-}
		$(\widehat x,\widehat\theta)$ is feasible for \raf{r-4-d} and $(c^0)^T\widehat x+ \int_{v\in E_+(0,D^{-1})}\widehat\theta(v)dv\le\left(\alpha+\frac1\tau+\frac{\alpha\lambda_{\max(D)} \tau n}{{\lambda_{\min(D)}}}\right)\OPT_R.$
	\end{claim}	
	\begin{proof}
		First, we show feasibility of $(\widehat x,\widehat\theta)$.  If $j\in J$, then $\int_{v\in E_+(0,D^{-1})}v_j\theta^*(v)dv\ge x_j^*\ge\tau$, and hence, $\int_{v\in E_+(0,D^{-1})}v_j\widehat\theta(v)dv\ge \frac1\tau (\int_{v\in E_+(0,D^{-1})}v_j\theta^*(v)dv)\ge 1=\widehat x_j$. On the other hand, if $j\not\in J$ and $\widehat x_j=x_j=1$, then 
		\begin{align*}
		\int_{v\in E_+(0,D^{-1})}v_j\widehat\theta(v)dv&\ge \lambda_{\max(D)}\sum_{v'\in T}\int_{v\in E_+(0,D^{-1})}v_j\delta_n(v-v')dv\ge\lambda_{\max(D)}\max_{v\in E_+(0,D^{-1})} v_j\\&=\lambda_{\max(D)}\max_{\|w\|\le1,~D^{-1}w\ge 0}(\b1_n^j)^TD^{-1}w\ge\lambda_{\max(D)}(\b1_n^j)^TD^{-1}\b1_n^j\ge\lambda_{\max(D)}\lambda_{\min(D^{-1})}=1,
		\end{align*}
		where $\b1_j^n$ is the $j$ unit vector of dimension $n$.
		Also, by definition of $\widehat x$, $\widehat\theta$ and $J$,
		\begin{align*}
		(c^0)^T\widehat x+\int_{v\in E_+(0,D^{-1})}\widehat\theta(v)dv&\le\sum_{j\in J}c^0_j+\sum_{j\not\in J}c^0_jx_j+\frac1\tau\int_{v\in E_+(0,D^{-1})}\theta^*(v)dv+\lambda_{\max(D)}\sum_{v'\in T}\int_{v\in E_+(0,D^{-1})}\delta_n(v-v')dv\\
		(c^0)^T\widehat x+\int_{v\in E_+(0,D^{-1})}\widehat\theta(v)dv&\le\sum_{j\in J}c^0_j+\sum_{j\not\in J}c^0_jx_j+\frac1\tau\int_{v\in E_+(0,D^{-1})}\theta^*(v)dv+\lambda_{\max(D)}\sum_{j\not\in J}x_j\\
		&\le \frac1\tau\sum_{j\in J}c^0_jx_j^*+\alpha \sum_{j\not\in J}c^0_jx_j^*+\frac1\tau\int_{v\in E_+(0,D^{-1})}\theta^*(v)dv+\alpha\lambda_{\max(D)} \sum_{j\not\in J}x^*_j\tag{by  Claim~\ref{cl5-}}\\
		&\le\max\{\frac1\tau,\alpha\}\left[(c^0)^Tx^*+\int_{v\in E_+(0,D^{-1})}\theta^*(v)dv\right]+\alpha \tau n\lambda_{\max(D)}\\
		&\le \max\{\frac1\tau,\alpha\}\left[(c^0)^Tx^*+\int_{v\in E_+(0,D^{-1})}\theta^*(v)dv\right]+\frac{\alpha\lambda_{\max(D)} \tau n}{{\lambda_{\min(D)}}}\OPT_R\tag{by Claim~\ref{cl7-}}\\&\le\left(\max\{\frac1\tau,\alpha\}+\frac{\alpha\lambda_{\max(D)}\tau n}{{\lambda_{\min(D)}}}\right)\OPT_R\le \left(\alpha+\frac1\tau+\frac{\alpha{\lambda_{\max(D)}} \tau n}{{\lambda_{\min(D)}}}\right)\OPT_R.
		\end{align*}
	\end{proof}
	It follows that for any $c=c^0+u\in\cC$, given by \raf{U4}, 
	\begin{align}\label{ee3}
	c^T\widehat x  =(c^0)^T\widehat x+u^T\widehat x\le (c^0)^T\widehat x+ \int_{v\in E_+(0,D^{-1})}\widehat\theta(v)dv+\le \left(\alpha+\frac1\tau+\frac{\alpha\lambda_{\max(D)} \tau n}{\lambda_{\min(D)}}\right) \OPT_{R}.
	\end{align} 
	The theorem follows by choosing $\tau:=\sqrt{\frac{{\lambda_{\min(D)}}}{\alpha{\lambda_{\max(D)}} n}}$.
\end{proof}

\subsection{Robust-with-high-probability Approximation Algorithm for Ellipsoidal Uncertainty}
By arguments similar to the ones used to prove Theorems~\ref{PU5-thm} and ~\ref{PU4-thm} we obtain the following result.
\begin{theorem}\label{PU6-thm}
	Consider the DO problem~\raf{cop-1}, when $\cS\subseteq\{0.1\}^n$ and the uncertainty set $\cC$ is given by \raf{Affine} and $\cD=\{\delta\in\RR^{k}_+:\|D^{-1}\delta\|_2\le 1\}$. Then, there is an $O\big(\alpha\sqrt{k \log(k)\frac{\lambda_{\max(D)}}{\lambda_{\min(D)}}\frac{c_{\max}}{c_{\min}}}\big)$-robust-with-high-probability approximation algorithm for the robust version~\raf{rcop-1}, which can be obtained by a polynomial number of calls to an $\alpha$-\Ancrr\ integrality gap verifier for the nominal problem~\raf{cop-1}. 	
\end{theorem}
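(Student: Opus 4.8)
The plan is to merge the semi-infinite dual machinery used for ellipsoidal uncertainty in Theorem~\ref{PU4-thm} with the negatively-correlated rounding and dual-fitting argument of Theorem~\ref{PU5-thm}; as in the table, I assume $D^{-1}>0$. First I would write the robust problem as $\OPT_R=\min_{x\in\cS}\{(c^0)^Tx+z^*(x)\}$, where the inner maximization $z^*(x)=\max_{\delta\in\RR_+^k:~\|D^{-1}\delta\|_2\le1}(\bC^Tx)^T\delta$ is expressed, exactly as in~\raf{ir-4}--\raf{ir-4-d}, as a semi-infinite LP over $E_+(0,D^{-1})=\{v\in\RR_+^k:~v^TD^2v\le1\}$ with dual
\begin{align*}
z^*(x)=\min\Big\{\textstyle\int_{v\in E_+(0,D^{-1})}\theta(v)dv~:~\int_{v\in E_+(0,D^{-1})}v\,\theta(v)dv\ge \bC^Tx,~\theta\ge0\Big\}.
\end{align*}
As in Theorem~\ref{PU4-thm}, I invoke the multiplicative-weights solver of~\cite{EMN19} to obtain near-optimal primal/dual solutions and to guarantee strong duality, so that the relaxation~\raf{rlp-rlx} can be rewritten in the combined form $z^*_R=\min\{(c^0)^Tx+\int_v\theta(v)dv:~\int_v v\,\theta(v)dv\ge\bC^Tx,~\theta\ge0,~x\in\cQ\}$.

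The key preliminary estimate is the ellipsoidal analogue of Claim~\ref{cl9}: for any $x\in\{0,1\}^n$ with $\bC^Tx\ne0$, pick $r,j$ with $c^r_j>0$ and $x_j=1$; since every $v\in E_+(0,D^{-1})$ satisfies $v_r\le\|v\|_2\le\lambda_{\max}(D^{-1})=1/\lambda_{\min}(D)$, the dual constraint~\raf{theta} gives $c_{\min}\le(c^r)^Tx\le\int_v v_r\theta^*(v)dv\le\frac{1}{\lambda_{\min}(D)}z^*(x)$, whence $z^*(x)\ge c_{\min}\lambda_{\min}(D)$ and therefore $\OPT_R\ge c_{\min}\lambda_{\min}(D)$. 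Exactly as in the preamble to Theorem~\ref{PU5-thm}, the degenerate case $\bC^Tx=0$ is disposed of separately by solving~\raf{lprlx} with $c=c^0$ restricted to the coordinates outside $J:=\{j:~\exists r,~c^r_j>0\}$ and returning the better of the two solutions.

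I then solve the relaxation to get $(x^*,\theta^*)$ and apply the $\alpha$-\Ancrr\ verifier to $x^*$ to produce $\widehat x\in\cS$. Fixing $\tau\in(0,1)$ and $\rho:=\frac{6\ln(2k)}{\tau}$, split the generators into $R:=\{r:~(c^r)^Tx^*\ge\tau c_{\max}^r\}$ and its complement. For $r\in R$, property~(i) of the \Ancrr\ and feasibility of $x^*$ give $\EE[(c^r)^T\widehat x]\le\alpha(c^r)^Tx^*\le\alpha\int_v v_r\theta^*(v)dv$, so the negatively-correlated Chernoff bound (Fact~\ref{f1}) applied with $w=c^r/c_{\max}^r$, together with a union bound, yields $(c^r)^T\widehat x\le(1+\rho)\alpha\int_v v_r\theta^*(v)dv$ simultaneously for all $r\in R$ with probability $1-ke^{-\rho\alpha\tau/3}\ge1-\frac1{4k}$. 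For $r\notin R$, I fit the dual by adding, to $(1+\rho)\alpha\theta^*$, a Dirac mass $\lambda_{\max}(D)\,(c^r)^T\widehat x\cdot\delta_k(v-v^{(r)})$ at $v^{(r)}:=\argmax_{v\in E_+(0,D^{-1})}v_r$; the coefficient $\lambda_{\max}(D)$ is forced by $\max_{v\in E_+(0,D^{-1})}v_r\ge(\b1_k^r)^TD^{-1}\b1_k^r\ge\lambda_{\min}(D^{-1})=1/\lambda_{\max}(D)$ (this is the estimate from Claim~\ref{cl6-}, using $D^{-1}>0$ to ensure $\b1_k^r$ is admissible), and it guarantees $\int_v v_r\widehat\theta(v)dv\ge(c^r)^T\widehat x$.

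With this $\widehat\theta$, feasibility of $(\widehat x,\widehat\theta)$ for the combined program holds with probability $1-o(1)$, and the objective splits as $(c^0)^T\widehat x+(1+\rho)\alpha\int_v\theta^*(v)dv+\lambda_{\max}(D)\sum_{r\notin R}(c^r)^T\widehat x$. Bounding $(c^0)^T\widehat x$ and $\sum_{r\notin R}(c^r)^T\widehat x$ in expectation (then boosting to high probability via Remark~\ref{r0}), the first two terms contribute $(1+\rho)\alpha z^*_R$ while the last contributes at most $\lambda_{\max}(D)\alpha k\tau c_{\max}$; using $\OPT_R\ge c_{\min}\lambda_{\min}(D)$ to convert this additive term into a multiple of $\OPT_R$ gives, up to $(1+\epsilon)$ factors, an overall bound $\big((1+\rho)+\tfrac{\lambda_{\max}(D)}{\lambda_{\min}(D)}\tfrac{c_{\max}}{c_{\min}}k\tau\big)\alpha\OPT_R$. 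Since any $c=c^0+\bC\delta\in\cC$ obeys $c^T\widehat x\le(c^0)^T\widehat x+z^*(\widehat x)\le(c^0)^T\widehat x+\int_v\widehat\theta(v)dv$ by weak duality, balancing $\rho\sim\tfrac{\log k}{\tau}$ against $\tfrac{\lambda_{\max}(D)}{\lambda_{\min}(D)}\tfrac{c_{\max}}{c_{\min}}k\tau$ with $\tau=\Theta\big(\sqrt{\tfrac{\log k\,\lambda_{\min}(D)c_{\min}}{k\,\lambda_{\max}(D)c_{\max}}}\big)$ produces the claimed $O\big(\alpha\sqrt{k\log(k)\tfrac{\lambda_{\max}(D)}{\lambda_{\min}(D)}\tfrac{c_{\max}}{c_{\min}}}\big)$ factor. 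I expect the main obstacle to be the rigorous semi-infinite dual-fitting step: one must check that inserting Dirac masses into the integral dual both preserves feasibility of the ellipsoidal constraint~\raf{theta} and remains valid against the merely near-optimal (rather than exactly optimal) dual returned by the solver of~\cite{EMN19}, while the $r\in R$ tail bound controls all generators at once only through the negative-correlation property~\raf{nc2}.
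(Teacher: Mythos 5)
Your proposal is correct and is exactly what the paper intends: the paper gives no standalone proof of Theorem~\ref{PU6-thm}, stating only that it follows ``by arguments similar to'' Theorems~\ref{PU5-thm} and~\ref{PU4-thm}, and your write-up carries out precisely that combination --- the semi-infinite dual over $E_+(0,D^{-1})$ from Theorem~\ref{PU4-thm}, the analogue of Claim~\ref{cl9} giving $z^*(x)\ge c_{\min}\lambda_{\min}(D)$, the \Ancrr/Chernoff treatment of the set $R$ with $\rho=\frac{6\ln(2k)}{\tau}$ as in Claim~\ref{cl8}, the Dirac-mass dual fitting with coefficient $\lambda_{\max}(D)$ justified via $(\b1^r_k)^TD^{-1}\b1^r_k\ge\lambda_{\min}(D^{-1})$ (using $D^{-1}>0$), and the same balancing of $\tau$. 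The estimates all check out, so this is a faithful and correct realization of the paper's omitted argument.
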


\begin{remark}\label{r3}
	We may also consider the case when the set $\cS\subseteq\{0,1\}^n$  and the uncertainty set $\cC$ is given by
	\begin{align}\label{U4-}
	\cC=\left\{c:=c^0+u~\left|~u\in\RR_+^n,~\|D^{-1}u\|_2\le 1,~u\le d,\quad Au\le \b1_m\right.\right\},
	\end{align}
	for a given non-negative matrix $A\in\RR_+^{m\times n}$. Using similar techniques, a bound generalizing the results of Theorems~\ref{PU2-thm} and \ref{PU4-thm}  can be obtained.
\end{remark}	


\end{document}